\newtheorem{theorem}{Theorem}
\newtheorem{proof}{Proof.}
\newtheorem{proposition}{Proposition}
\begin{document}

\title{Can Market Risk Perception Drive Inefficient Prices? Theory and Evidence}
 \author{\vspace{0.5cm}\\\textbf{Matteo Formenti}\thanks{I thank Stefano Lovo (HEC, Paris) and Tommaso Proietti (Universit\'a di Roma Tor Vergata) for useful comments received during the first draft, and the participants to the European Financial Management Conference 2011. I am grateful to the anonymous referee of the Computing in Economics and Finance Conference. Last, I am in debt with Roberto Monte (Universit\'a di Roma Tor Vergata) who supervised me and encouraged me to study this field.}\\Group Risk Management\\UniCredit Group \\ \\ Universit\'a  Carlo Cattaneo}
\date{January 31, 2014}
\maketitle
\bigskip
\normalsize
\begin{abstract}

This work presents an asset pricing model that under rational expectation equilibrium perspective shows how, depending on risk aversion and noise volatility, a risky-asset has one equilibrium price that differs in term of efficiency: an informational efficient one (similar to Campbell and Kyle (1993)), and another one where price diverges from its informational efficient level. The former Pareto dominates (is dominated by) the latter in presence of low (high) market risk perception. The estimates of the model using S\&P 500 Index support the theoretical findings, and the estimated inefficient equilibrium price captures the higher risk premium and higher volatility observed during the Dot.com bubble 1995--2000. 
\\
\\
\bigskip
JEL:  G10; \, G14; \, G12;\, C13; \, C50 \\
Keywords: Efficient Market Hypothesis, Market Risk Perception, dot.com Bubble.
\end{abstract}
  \thispagestyle{empty}
  
\newpage{}

\pagenumbering{arabic}

\section*{Introduction}
This work presents an asset pricing model that under rational expectation equilibrium perspective shows how, depending on risk aversion and noise volatility, an economy has one equilibrium price that differs in term of efficiency: an informational efficient one (similar to Campbell and Kyle (1993)), and another one where price diverges from its informational efficient level. The combination of risk aversion and noise volatility, considered as exogenous to our economy, measure the ``risk perception'' of the informed investors who seek the optimal investment between a risky asset and a risk-free asset. In particular, the model shows that price is efficient when informed investors have a \textit{low} market risk perception, and is inefficient when investors' risk perception is \textit{high}, and it proves that the efficient one Pareto dominates (is dominated by) the inefficient one in presence of low (high) market risk perception.

The work uses the economy described in Campbell and Kyle (1993) (hereinafter C.K.) which considers a riskless and a risky asset, and investors  trading with noise traders. The latter aim to capture the irrationality of financial market, such that price is written as a linear function of the investors' information set and a noise variable. It is important to remark that, in their model, price \textit{is always ``semi-strong'' efficient} because it equals the sum of the expected future discounted dividend (Fama (1970)) and the noise term. Finally, they add a constant term to capture the risk-premium demanded by a risk-averse investors and they find evidence of such ``semi-strong'' efficient price form using real data (S\&P 500 Index).

This work extends the celebrated model \textit{assuming that price is not always} the expected future discounted dividend although investors still assume a price in the ``semi-strong'' form. The theoretical model, and the empirical evidence, finds an efficient price only when risk-aversion and noise volatility are \textit{low}, while it deviates from such efficiency when risk perception is \textit{high}. Result is obtained in the theoretical model: (i) letting the coefficients of the ``semi-strong'' price be misspecified, when investors search the optimal investment strategy; (ii) searching with simulations the price that makes optimal the demand, for a given set of exogenous parameters; (iii) using the maximum Utility criteria to choose among several candidates equilibrium prices, each solution of the investment problem, the equilibrium price.\footnote{The Pareto-dominance is proved numerically due to the high degree of non linearity of equations in the optimal investment problem.}

Put in other words, in C.K. price is \textit{a-priori efficient} and the investors find the corresponding Utility value once the exogenous parameters of the economy are estimated. In this work price is \textit{a-priori inefficient} thus depending on the level of the exogenous variables either the informational efficient or the informational inefficient equilibrium will have the highest Utility. 
Such equilibrium price can be the efficient one, when risk aversion and noise volatility is \textit{low}, and inefficient when risk perception is \textit{high}. 

The following implications of the main result derive for financial markets: (i) investors consider an inefficient price as a profitable condition if risk perception is \textit{high}; (ii) a combination of high risk aversion and high price volatility solely lead the result, that is a \textit{low} (\textit{high}) risk aversion combined with \textit{high} (\textit{low}) noise volatility does not necessarily allow to determine an inefficient price; (iii) when risk aversion is high, there exists a threshold value of noise volatility, and a consequent threshold value of price volatility, inducing the investors to have a higher Utility with inefficient equilibrium price. 

The empirical estimates of the model support the theoretical results. Using the long time series of S\&P 500 Index (1871-2009) and Nasdaq Index (1974-2009) there are the following results: (i) data reflects the fundamental value, that is the efficient price, during the long period 1871-2009; (ii) data shows that S\&P 500 price Index was not efficient during the period 1995-2000 known as the Dot.com bubble; (iii) Nasdaq price Index was not efficient during the period 1995-2000. The likelihood ratio test, used to compare the fit of the two models, rejects at 1\% the null hypothesis of the efficient price. In conclusion, the estimates seem to confirm the theoretical results that focus on the role of market noise volatility and investors' risk aversion to determine that inefficient price.

Furthermore, the empirical estimates shows that (i) investors ask a higher risk premium when risk perception is high, and the inefficient price holds in the market; (ii) there is an high estimated value of market volatility during the Dot.com bubble. The two evidences confirm the well-known empirical anomalies such as the equity premium and the excess volatility puzzle, as a key-drivers of the higher risk involved in period of market turbulence.

Last, it is important to remark the limits of this model: (i) it does not help to explain the endogeneity relationship between risk perception and efficient (inefficient) equilibrium price, because the risk perception's parameters are treated as exogenous; (ii) it does not analyze the dynamics from the efficient to the inefficient equilibrium (and conversely), inasmuch as any equilibrium price is solution of the investors investment optimization problem.
\bigskip

\section{Literature}
The idea that financial markets are perfect and correctly report the information has been widely debated since Fama (1970)\footnote{\, The term was originally coined in an unpublished working paper by Harry Roberts (1967), whereas the history of the efficient market hypothesis begins with Cardano in 1564 as reported by Sewell (2008).} introduced the Efficient Market Hypothesis (EMH) as an equilibrium condition to test in asset pricing models. In turn, what has not been adequately analyzed thoroughly is that, undesirable as they may be, inefficient prices characterize financial markets and, according to this work, they can be an equilibrium phenomenon desired by the market participants. In efficient markets prices are informative, transmit and make public the private information of the investors. Conversely, a market where price changes are not entirely due to the arrival of new information is inefficient because prices do not report correctly all the available information.

The works of Grossman (1976), Grossman and Stiglitz (1976), and later Kyle (1985, 1989) and C.K. (1993) are the first to include EMH and the presence of informed investors and noise traders in a contest of competitive markets.\footnote{\, See Brunnermeier, 2001 for an extended literature review.} These works study the role of information in price dynamics considering: (i) the informed investors as rational active traders who know everything and their trading is perfect, and (ii) noise traders as such investors who do not collect information and whose trading activity is informative for others. In such models, price can deviate from the fundamental value due to the action of noise traders and the desire of the rational investors to exploit them as much as possible. REE perspective had great success in capturing the dynamics and informativeness of asset prices, mainly because of the easy tractability of the equilibrium price and, because of the existence of a semi-strong efficient price. In such models noise traders play a key role in clearing the market and avoiding breakdown of the market.\footnote{\, Without the presence of noise traders, the no-trade theorem of Milgrom and Stokey (1982) applies and there is no exchange.} Despite these important theoretical results, and some critiques for the induced 'schizophrenia'' of the informed investors (Hellwig (1980), Kyle (1989), Back (1992)), some important market empirical anomalies have not been captured (see Siegel (2002) for an extensive review of all anomalies). The financial literature of last twenty years reviewed these apparent anomalies, taking the efficient markets hypothesis as a benchmark. They include the equity premium puzzle (Mehra and Prescott (1985)), the excess volatility in stock returns and price-dividend ratios (Grossman and Shiller (1981), LeRoy and Porter (1981), Shiller (1981)), and the predictability of stock returns (Poterba and Summers (1988), Fama and French (1989), Campbell and Shiller (1988)). According to Shiller (1998) these anomalies suggest that the underlying principles of rational behavior, and the efficient markets hypothesis, are not entirely correct and that we need to look 'as well at other models of human behavior'. This model shows that the informational efficient and inefficient price are both an equilibrium condition when both rational and irrational investors trade in the market.

The theoretical findings of this work are closer to Monte et al. (2010) who show, in an asymmetric information setting similar to Wang (1993), that financial markets admit different equilibria: a \textit{low} investors' risk perception induces investors to trade as perfect competitors, and consequently informationally efficient equilibria are achieved, while a \textit{high} investors' risk perception leads investors to behave as imperfect competitors, and informationally inefficient equilibria result. This work does not consider the asymmetry in the information structure, such as difference among informed or uninformed investors, though it is the first to support empirically the main theoretical findings.  

The model has been recently extended (see Monte and Formenti, w.p. 2012) considering two investors who, besides observing the publicly known dividend realizations, hold two different pieces of private information on the growth rate of dividend. The two groups differently informed compete each other to rationally extract their missing pieces of information from the demand for the risky asset of their competitors. In a Bayesian Nash equilibrium perspective, each group ends up with forecasting the forecasts of their competitors. A change in the assumption do not modify the results shown in this work. Besides the full informative equilibria, in which all private information is revealed and the competition of the two groups of investors ceases, it is shown that partially informative equilibria exist and the learning process, as well as the competition between the two groups, is never ending. As a major consequence, in partially informative equilibria the risky asset price reflects inefficiently the private information. Moreover, there is still evidence that partially informative equilibria Pareto dominate the full informative ones on the increasing of noise volatility and the investors' risk aversion.

\medskip

The paper is organized as follows. The economy is spelled out in Section \ref{Economy}.  Section \ref{Equilibrium} shows the model equilibria, the efficient price (Equilibrium-Type A) and the inefficient price ones (Equilibrium-Type B), and the utility criteria used to compare them. Section \ref{Numerical-Solution} calibrates the model and it shows the theoretical results. Section \ref{estimation} provides the estimates of the model and it shows that real data supports the main theoretical results. Section \ref{Conclusion} concludes.

\section{The Economy}
\label{Economy}

Consider an economy composed of informed risk-averse rational investors and noise traders exchanging a risky asset. The informed investors, or large trader interchangeably, are risk averse and observe the dividend process having a private information regarding the dividend growth rate; the noise traders represent the economic agents who exchange the asset without maximizing any Utility function, appear in the market randomly to buy/sell the asset for liquidity reasons and capture the irrationality of the market. Assume that dividend and price are normally distributed and that changes in the \emph{level} of dividend and stock price have constant variance. As a consequence the variance of percentage returns and dividend growth rate increases (decreases) when the level of price and dividends decreases (increases). Now de-trend dividend and stock prices by an exponential growth trend $\xi$, obtained by market data as the dividend growth mean. The de-trending operation on price and dividend let the new variables follow an Ornstein-Uhlenbeck process. According to C.K. (1993) the de-trended operation has two other effects: it removes exponential growth from the ex-ante mean of the data and from the variance of data. The latter one has an effect similar to a log transformation such that 
\begin{equation}
\label{D_U-P_U}
D(t)\equiv D^{u}(t)e^{-\xi t} \qquad \qquad P(t)\equiv P^{u}(t)e^{-\xi t}
\end{equation}
where the variables $D^{u}$ and $P^{u}$ are the observed dividend and price of the stock, and $D(t)$ and $P(t)$ are the de-trended dividend and price. Assume that changes in $D_{t}$ and $P_{t}$ are homoskedastic and normally distributed and, both have one unit root with a particular combination of levels of price and dividend that is stationary, that is they are cointegrated. Let $r$ be the time-invariant riskless interest rate. A permanent one-dollar changes in the de-trended dividend has a discounted value of $1/(r-\xi)$ dollars, and the cointegrating stationary vector is $D(t) - (r-\xi) P(t)$ with the unconditional mean equals to $\gamma$ = $E[ D(t) -(r-\xi)P(t)]$. In turn, the investors decompose price into the sum of a fundamental value, a constant term, and a noise term:
\begin{align} 
\label{Price-Equation}
P(t)=&p_{0}+V(t)+\Theta(t)
\end{align} 
where $p_{0}\equiv \frac{\gamma}{r-\xi}$ captures the constant risk premium per share of stock demanded by risk-averse informed investors, $V(t)$ represents the expected future dividend (i.e., public information) and the non-dividend component (interpreted as the investors' private information):
\begin{equation}
\label{V}
V(t)=p_{D_0}D_{0}(t)+p_{D_1}D_{1}(t)+p_{I}I(t)
\end{equation}
and $\Theta(t)$ is the noise component. This price's form is convenient because (i) it is linear, (ii) it does not require any assumption about the discount rate, and (iii) the noise trading component, given by a random supply of the stock, captures the presence of liquidity traders. Finally, note that noise trading influences the stock price because the informed investors are risk-averse and ask a risk premium that is captured by the constant term. A special case of Equation (\ref{Price-Equation}) is the benchmark case (Equilibrium-Type A) and the efficient price derived in the work of C.K. (1993) in which the fundamental value $V(t)$ is the present expected value of dividends and non-dividend discounted at the risk-less rate $r$.

\paragraph*{Dividend structure.}

There are continuous dividend announcements to the market. The de-trended dividend is the sum of the permanent and temporary components, independently distributed, both of which are directly observed by the informed investors:
\[
D(t)=D_{0}(t)+D_{1}(t).
\]
The permanent component is a brownian motion process, and the temporary component is a mean reverting process, a continuous-time
AR(1), given by
\begin{equation}
\label{Dividend-equ.}
dD_{0}(t) =\alpha_{I}I(t)dt+\sigma_{0}dw_{0}(t), \qquad 
dD_{1}(t) =-\alpha_{D}D_{1}(t)dt+\sigma_{D}dw_{D}(t)
\end{equation}
where $dw_{0}(t)$ and $dw_{D}(t)$ are two standard independent brownian motions, $\sigma_{0}$ and $\sigma_{D}$ constitute the innovations in $D_{0}(t)$ and $D_{1}(t),$ and the quantities $\sigma_{0}^{2}$ and $\sigma_{D}^{2}$ are the innovation variance of $D_{0}(t)$ and $D_{1}(t)$ respectively. The idea that dividends have a private hidden information content is an old one (see Lintner (1956), Miller and Modigliani (1961), Watts (1973)) and it has been tested empirically by several works (Shiller (1981), DeAngelo \textit{et al.} (1992)). The parameter $\alpha_{I}$ captures this hidden private information content in the dividend process and is useful for scaling the unit of $I$. $I(t)$ measures how much $D_{0}(t)$ is expected to increase in the future. The positive parameter $\alpha_{D}$ measures the mean speed reversion of the transitory component and $-\alpha_{D}D_{1}(t)$ measures the expected growth rate of dividend.\footnote{\,  The scaling parameter does not change the final results (see C.K. (1993), Appendix A).}

\paragraph*{Information structure.}

Informed investors receive private information $I(t)$ (a private signal) about the asset price. It is convenient to interpret $I(t)$ as the ``non-dividend information'' component. C.K. (1993) defines $I(t)$ as the measurement error on the transitory component. The information process is defined as $I(t)\equiv \hat{D}_{1}(t)-D_{1}(t)= D_{0}(t)-\hat{D}_{0}(t)$, where $[\hat{D}_{0}(t); \hat{D}_{1}(t)]$ are the investors' estimates of $D_{0}(t)$ and $D_{1}(t)$ respectively. The information dynamics is a mean reverting process
\begin{equation}
\label{Info-equ.}
dI(t)=-\alpha_{I}I(t)dt+\rho_{I}\sigma_{0\,}dw_{0}(t)+(2\rho_{I}-\rho_{I}^{2})^{1/2}\sigma_{0\,}dw_{I}(t)
\end{equation}
in which $dw_{I}(t)$ is a standard brownian motion independent of $dw_{0}(t)$, and $\sigma_{I}$ constitutes the innovation in $I(t)$. Investors receive new information about the traded stock captured by the two random components of the process, measured by the standard deviations $\sigma_{0}$ and $\sigma_{I}$. The parameter $\alpha_{I}$ captures the mean-reverting speed at which the new information is updated into the price. As $\alpha_{I}$ increases (decreases), private information decays faster (slower) and it is short lived in the price dynamics. The correlation structure between $dD(t)$ and $dI(t)$ is given by $\chi=-\frac{\rho_{I}}{\sqrt{2\rho_{I}}}$, which ensures that 
\begin{equation}
\label{div-info}
E\left\{ I(t+s)\,\mid D[-\infty,t]\right\} =0, \qquad \qquad s\geq0
\end{equation}
and the history of the dividend process cannot forecast the future of $I(t)$.  A technical condition $0\leq \rho_{I}\equiv\sigma_{I}^{2}/2\sigma_{0}^{2}\leq 2$ guarantees that $D$ does not forecast $I$, and the variables $D(t)$ and $I(t)$ are independently distributed. C.K. (1993) show that Equation (\ref{div-info}) uniquely determines the diffusion term in the $[D(t), I(t)]$ processes.
  
\paragraph*{Noise trading.}

Following noisy rational expectational models, the total amount of risky asset supply is $1+\Theta(t)$. The process $\Theta(t)$ models the deviation of the current risky asset supply from its long-run stationary level normalized to 1 and it implies that noise traders have inelastic demand of $1-\Theta(t)$ shares of the stock at time $t$. Moreover, $\Theta(t)$ is interpreted as the number of remaining shares available to the market. The noise process has a non-null mean reverting dynamics
\begin{equation}
\label{Theta-equ.}
d\Theta(t)=-\alpha_{\Theta}\Theta(t)\,dt+\sigma_{\Theta\,}dw_{\Theta}(t),
\end{equation}
in which $dw_{\Theta}(t)$ is a standard Brownian motion independent by $[dw_{D_{0}}(t)$, $dw_{D_{1}}(t)$, $dw_{I}(t)]$, the positive parameter
$\alpha_{\Theta}$ is the constant mean speed of reversion of the process $\Theta(t)$ towards its long-run null level, and $\sigma_{\Theta}$ is the volatility of noise. The stochastic supply of the risky asset in the aggregate market makes the market incomplete.

The informed investors observe the history of $D(t), I(t)$ and $P(t)$ so at time $t$ their information set is
\[
 \mathfrak{F}(t)\,\equiv \, \sigma [D_{0}(s),D_{1}(s),I(t),\Theta(t),P(t); \, s\leq t]\,=\, \sigma [D(t),I(t),P(t); \, s\leq t].
\]
The informed investors direct observation of $I(t)$ and $\Theta(t)$ implies that in equilibrium the observation of the price (\ref{Price-Equation}) is equivalent to the observation of the signal $p_{D_{0}}D_{0}(t)+p_{D_{1}}D_{1}(t)$. On the other hand, the observation of the public dividend $D(t)$ is equivalent to the observation of the signal $D_{0}(t) + D_{1}(t)$. Therefore, from the observation of $D(t), I(t)$ the investors can also observe $D_{0}(t)$ and $D_{1}(t)$.

\paragraph*{CARA-Utility.}
The informed investors have a constant absolute-risk aversion (CARA) Utility function
\begin{equation}
\label{cara-utility}
u[t, c(t)]=-e^{-[\beta t+\varphi c(t)]},
\end{equation}
where $\beta$ is the time-impatience parameter and $\varphi$ is the coefficient of the absolute risk aversion. The use of the CARA Utility function, and the assumption of normality of dividend and stock prices, let the expected future dividend be discounted at the riskless rate of interest. This is equivalent to saying that an increase in the expected future dividend, given by a higher value of investors' private information, is captured by a change in the variable $V(t)$ in Equation (\ref{Price-Equation}), while an increase (decrease) in the informed investors' risk aversion causes a lower (higher) risk premium captured by the higher (lower) constant term. Investors choose consumption and inventory of risky assets to maximize their Utility given the information set 
\[
\max_{\Psi(t),\, c(t)} \mathbf{E} \left[- \int_{t=0}^{+\infty}u[t, c(t)]\,dt\,|\mathfrak{F}(t)\right]
\]
The use of CARA preferences implies the investors' optimal asset demand and the optimal equilibrium price are independent of their wealth distribution as well as the level of aggregate wealth. This is why CARA Utility greatly simplifies the optimization problem. The model has a closed-form solution according to CARA preference. 

\section{Equilibrium}
\label{Equilibrium}
The equilibrium of the economy described in Section \ref{Economy} is solved using the REE perspective developed by Lucas (1972), Green (1973), Grossman (1976), and Kreps (1977).\footnote{\, The book by Hens and Schenk-Hopp (2009) presents a wide discussion of different theoretical, analytical, and empirical techniques under REE that explain the market dynamics of asset prices.} The mechanism is the following one: (i) investors solve the optimization problem and maximize the Utility considering the market-clearing price as parametric; (ii) investors and noise traders submit a scheduled demand to a Walrasian auctioneer; (iii) the auctioneer announces a price and receives from all market participants what their demand/supply would be at that price; (iv) the auctioneer clears the market and determines the equilibrium price. 

Following the procedure described above, the investors conjecture the price's form linearly depending on the state variables of the economy:
\begin{align}
\label{Price-Conjectured}
P(t)=\:&p_{0}+p_{D_0}D_{0}(t)+p_{D_1}D_{1}(t)+p_{I}I(t)+\Theta(t)
\end{align}
in which $p_{0}$ is expected to be negative, as it is the discount on price requested by the risk averse investors due to risk-premium, the state variables $[D_{0}(t), D_{1}(t)]$ account for the observed dividend $D(t)$, $I(t)$ is the hidden stationary private information, and $\Theta(t)$ is the aggregate supply shock of the stock. The variance of the stock price is given by $\sigma^{2}_{P}= p_{D_0}^{2}\sigma^{2}_{D_{0}}+p_{D_1}^{2} \sigma^{2}_{D_{1}}+p_{I}^{2} \sigma_{I}^{2}+\sigma^{2}_{\Theta}$. Assume a constant variance of price implies the variance of percentage of returns increases as the price of the stock decreases and vice versa. This is a phenomenon studied since the works of Black (1976) and Nelson (1987).

\paragraph*{Investment opportunity.}

Stock price has the following process
\begin{equation}
\label{dP}
dP(t) = [-p_{D_{1}}\alpha_{D} D_{1}(t)+(p_{D_{0}}\alpha_{I}-p_{I}\alpha_{I})I(t)-\alpha_{\Theta}\Theta(t)]\,dt + H \,dw(t)
\end{equation}
where $H = \{p_{D_{0}}\sigma_{0}+p_{I} \rho_{I} \, \sigma_{0}, \, \, p_{D_{1}}\sigma_{D}, \, \, p_{I}\sqrt{2\rho_{I}-\rho_{I}^{2}} \sigma_{0}, \, \, \sigma_{\Theta}\}$ and $dw(t)$ is the vector of brownian motions. The investment opportunity given by $Q(t)$ is the instantaneous excess return on one share of risky asset, and it is given by the process
\begin{equation}
dQ(t) = [D(t)-rP(t)]\,dt+dP(t)
\end{equation}
where the risk-less rate $r$ is assumed to be constant. $Q$ is the undiscounted cash flow from the zero-wealth portfolio while $dQ(t)$ is interpreted as the return on a zero-wealth portfolio long of one share of stock fully financed by borrowing at the risk-less rate (see Wang (1993)). 

Given the process (\ref{dP}), $Q(t)$ satisfies the stochastic differential equation
\begin{align}
\label{dQ-return}
dQ(t) =&  \:[D(t)-rP(t)]\,dt +dP(t)  \\
 =& \:[e_{0}+e_{D_{0}}D_{0}(t)+e_{D_{1}}D_{1}(t)+e_{I}I(t)+e_{\Theta}\Theta(t)]\,dt+ H \,dw(t)\nonumber
\end{align}
where $e_{0}=-r p_{0}$, \, $e_{D_{0}}=1-r p_{D_{0}}$, \,  $e_{D_{1}}=1- p_{D_{1}}(r-\alpha_{D})$, and $e_{\Theta}=-(r+\alpha_{\Theta})$. The conditional expectation of the excess return of one share of stock is $E[dQ] = [ e_{0}+e_{D_{0}}D_{0}+e_{D_{1}}D_{1}+e_{I}I+e_{\Theta}\Theta]\,dt$.
Notice that the expected excess return is affected by all the state variables of the economy while, noise volatility directly influences price volatility without affecting the investment opportunity. As in Wang (1993), the level of aggregate stock supply affects $dQ$ because it determines the total risk exposure of the economy.

\paragraph*{The optimization problem.}
The investors' wealth $W(t)$ has the following dynamics:
\begin{equation}
\label{dW-wealth-equ.}
dW(t)=[rW(t)-c(t)]\, dt+\Psi(t)\,dQ(t),
\end{equation}
where $c(t)$ is the investor's consumption policy, $\Psi(t)$ is the investors' inventory as the holding of the risky asset at time $t$. Investors maximize the expected value of the exponential Utility over the infinite time horizon, subject to the wealth dynamics and given the information set at time $t$, by controlling their inventory $\Psi(t)$ and their consumption $c(t)$. The investor's optimization problem is 
\begin{align}
\label{Investors'-objective-function}
\max_{\Psi(t),\, c(t)} \mathbf{E} & \left[- \int_{t=0}^{+\infty}e^{-[\beta t+\varphi c(t)]}\,dt\,|\mathfrak{F}(t)\right] \\ \nonumber
 s.t. \quad dW(t)=\:&[rW(t)-c(t)]\, dt+\Psi(t)\,dQ(t)
\end{align}
where $\mathbf{E}[\,\cdot\,|\mathfrak{F}(t)]$ is the conditional expectation operator given the information set $\mathfrak{F}(t)$. 
Let $J(Z, W, t)$\footnote{\, The assumption is that $J(Z,W,t)$ is twice differentiable in each of the state variables.} be the value function, where $(Z, W)$ are the state variables moving the investment opportunities and $Z=\left(1,\, D_{0},\, D_{1},\, I,\,\Theta \right)^{\top}$. The variables of the economy can be written in compact form as a Vector Autoregression (VAR).
\begin{equation}
\label{compact-form}
dZ(t)=AZ(t)\,dt+B^{1/2} dw(t)
\end{equation}
where\begin{align}
\label{A-B-Equation}
A & \equiv\left(\begin{array}{ccccc}
0 & 0 & 0 & 0 & 0\\
0 & 0 & 0& \alpha_{I}  & 0\\
0 & 0 & -\alpha_{D} & 0 & 0\\
0 & 0 & 0 & -\alpha_{I} & 0\\
0 & 0 & 0 & 0 & -\alpha_{\Theta}
\end{array}\right),
\quad 
B^{1/2}\equiv\left(\begin{array}{cccc}
0 & 0 & 0 & 0\\
\sigma_{0} & 0 & 0 & 0\\
0 & \sigma_{D}& 0 & 0\\
-\rho_{I}\sigma_{0} & 0& (2\rho_{I}-\rho_{I}^{2})^{1/2}\sigma_{0} & 0\\
0 & 0 & 0 & \sigma_{\Theta}
\end{array}\right).
\end{align}
The value function $J(Z, W, t)$ satisfies the Bellman equation

\begin{align}
\label{bellman-equ}
0=&\max_{\Psi(t),\, c(t)}\left\{ -\int_{t=0}^{+\infty}e^{-[\beta t+\varphi c(t)]}+E\left[dJ(Z, W, t)\right]ds\right\} \\ \nonumber
dZ(t) =& AZ(t)\,dt+B^{1/2}dw(t) \\\nonumber
dW(t) =&[rW(t)-c(t)+\Psi(t)SZ(t)]\,dt+\Psi(t)T^{1/2}dw(t) \\ \nonumber
0 = & \lim_{s \to \infty} E\left[J(Z,W,t+s)\right].\nonumber
\end{align}
\begin{theorem}
\label{theorem-optimization}
The investors' conjectures the following form of the value function 
\begin{equation}
\label{Investors-Value-Function}
J(Z, W, t)=-e^{-\beta t-r \varphi W+\Phi(Z) - \lambda},
\end{equation}
where $\Phi(Z)=\frac{1}{2}Z^{\top}LZ$, the optimal share of the stock is
\begin{equation}
\label{optimal-demand}
\tilde{\Psi}(t)=-\frac{T^{1/2}\left(B^{1/2}\right)^{\top}L-S}{r\varphi T}Z(t),
\end{equation}
and the optimal consumption is given by
\begin{equation}
\label{optimal-consumption}
\tilde{c}(t)=\frac{\frac{1}{2}Z^{\top}(t)LZ^{\top}(t)+r\varphi W(t)+\lambda-\ln(r)}{\varphi},
\end{equation}
where $L\equiv(l_{i,j})_{i,j=1}^{5}$ is a symmetric real matrix and $\lambda$ is a real number satisfying
\begin{equation}
\label{Xi-lambda-equ.}
r[1+\lambda-\log(r)]-\beta-\frac{1}{2}\mathbf{tr}\left[\left(B^{1/2}\right)^{\top}LB^{1/2}\right]=0.
\end{equation}
The investors' demand and consumption equation are optimal when coefficients of matrix $L$ are solutions of the following algebraic Riccati equation
\begin{equation}
\label{Xi-L-equ.}
LUL-LX-X^{\top}L-Y=0,
\end{equation}
for
\begin{align}
\label{U-X-Y-equ.}\nonumber
U& \equiv B^{1/2}\left[TI_{4}-\left(T^{1/2}\right)^{\top}T^{1/2}\right]\left(B^{1/2}\right)^{\top}\\ 
X& \equiv T\left(A-\tfrac{1}{2}rI_{5}\right)-B^{1/2}\left(B^{1/2}\right)^{\top}S\\\nonumber
Y& \equiv S^{\top}S
\end{align}
and $I_{n}$ is the identity matrix with dimensions $n$.
\end{theorem}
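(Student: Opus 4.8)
The statement is a verification theorem for the Bellman equation (\ref{bellman-equ}), so my plan is the classical guess-and-verify argument specialized to the CARA--Gaussian, linear--quadratic structure. First I would insert the conjectured value function (\ref{Investors-Value-Function}) into (\ref{bellman-equ}) and compute $E[dJ]$ by It\^o's lemma along the state dynamics (\ref{compact-form}) and the wealth dynamics $dW=[rW-c+\Psi S Z]\,dt+\Psi T^{1/2}dw$. Writing $J=-e^{g}$ with $g=-\beta t-r\varphi W+\tfrac12 Z^\top L Z-\lambda$, every derivative reproduces $J$ up to a polynomial factor: $J_W=r\varphi e^{g}$, $J_{WW}=-(r\varphi)^2 e^{g}$, $J_Z=e^{g}LZ$ (up to the sign convention that fixes $\Phi$), $J_{ZW}=-r\varphi e^{g}LZ$, and $J_{ZZ}=e^{g}[L-(LZ)(LZ)^\top]$. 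Collecting these, the bracket in (\ref{bellman-equ}) becomes $e^{g}$ times an expression that is concave in $(c,\Psi)$, since $J<0$ forces $J_{WW}<0$ and the felicity is concave in $c$, so the critical points below are genuine maximizers.

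Second, I would impose the two first-order conditions. Differentiating in $c$ equates marginal utility $\varphi e^{-[\beta t+\varphi c]}$ with $J_W$; solving the resulting log-linear equation yields the affine-quadratic rule (\ref{optimal-consumption}), the $\ln r$ term arising precisely from the factor $r$ in $J_W$ when it is matched to marginal utility. Differentiating in $\Psi$ (a scalar, since a single risky asset is held) gives the linear equation $J_W SZ+J_{WW}\Psi T+J_{ZW}^\top B^{1/2}(T^{1/2})^\top=0$; substituting the derivatives above and cancelling the common $r\varphi e^{g}$ produces the mean--variance demand (\ref{optimal-demand}).

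Third --- and this is where the real work lies --- I would substitute $\tilde c$ and $\tilde\Psi$ back into (\ref{bellman-equ}), use the consumption first-order condition to replace $-e^{-[\beta t+\varphi\tilde c]}$ by $rJ$, and divide the identity by the common factor $e^{g}$. The result must hold for every $(Z,W)$, so I would match it by total degree in the state. The terms linear in $W$ cancel identically (the $r^2\varphi W$ from $J_W\cdot rW$ is killed by the $-r\varphi\tilde c$ contribution), which is exactly what validates the $-r\varphi W$ ansatz. The constant terms collect into $r[1+\lambda-\log r]-\beta-\tfrac12\mathbf{tr}[(B^{1/2})^\top L B^{1/2}]=0$, where I would use $\mathbf{tr}(LB)=\mathbf{tr}[(B^{1/2})^\top L B^{1/2}]$ to reach the stated form (\ref{Xi-lambda-equ.}). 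The quadratic forms in $Z$ --- arising from $Z^\top L A Z$, the quadratic part of $-r\varphi\tilde c$, the two $\Psi$-bilinear terms, and the $-\tfrac12 Z^\top L B L Z$ piece of the trace --- must sum to zero; after symmetrizing I expect the $LL$-terms to assemble into $B^{1/2}[TI_4-(T^{1/2})^\top T^{1/2}](B^{1/2})^\top=U$, the mixed terms into $X$ and $X^\top$, and the $S^\top S$ term into $Y$, with $U,X,Y$ as in (\ref{U-X-Y-equ.}), delivering the algebraic Riccati equation (\ref{Xi-L-equ.}).

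The main obstacle I anticipate is precisely this last matching: carrying the scalar factor $T$ through consistently (the demand carries a $1/T$, so the quadratic identity should be multiplied by $T$ before the coefficients line up with $U$, $X$, $Y$), tracking the several sign conventions (the sign of $\Phi$ in the exponent is pinned down only by demanding mutual consistency of (\ref{optimal-demand}) and (\ref{optimal-consumption})), and symmetrizing each rank-one outer product so that the coefficient matrix is exactly the symmetric $LUL-LX-X^\top L-Y$. A secondary but genuine point of rigor is the transversality requirement $\lim_{s\to\infty}E[J(Z,W,t+s)]=0$ appearing in (\ref{bellman-equ}): among the solutions of the matrix Riccati equation one must select the stabilizing $L$ for which the closed-loop state process under $\tilde\Psi$ is mean-reverting, so that the exponent remains integrable and the argument certifies optimality rather than merely a stationary point.
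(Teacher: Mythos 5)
Your proposal matches the paper's Appendix~A proof essentially step for step: the same guess-and-verify computation of the infinitesimal generator acting on $J=-e^{g}$, the same split of the Bellman maximization into separate first-order (and second-order) conditions in $\Psi$ and $c$, and the same matching of the quadratic-in-$Z$ and constant terms to obtain the Riccati equation (\ref{Xi-L-equ.}) and the scalar equation (\ref{Xi-lambda-equ.}) --- and your flag about the sign convention on $\Phi$ is apt, since the paper itself alternates that sign between the theorem statement and the appendix. The only divergence is transversality: where you would select a stabilizing Riccati root so the closed-loop state stays mean-reverting, the paper instead applies It\^{o}'s formula together with the consumption first-order condition to show $\mathbf{E}[J]$ solves a linear ODE with exponentially decaying solution, a detail within the same verification scheme (and one where your stabilizing-selection remark is, if anything, the more careful treatment).
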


\begin{proof} See Appendix A.
\end{proof}

Note the risk-less asset is assumed to be constant because investors search stationary equilibrium price when variables are not time-varying. In turn, the interest rate directly affects the price coefficient to account variation in the macroeconomic environment\footnote{\, A more intriguing issue arise when assuming a risk-less asset that depends on the nature of the equilibrium, so the interest rate depends on the type of efficient or inefficient equilibrium price. The work avoids to treat that issue due to the ambitious work of determining the solution of the Riccati equation when some parameters depends on the nature of the equilibrium.}.

\paragraph*{Market clearing.} Market clearing is the condition ensuring the conjectured price in Equation (\ref{Price-Conjectured}) is the equilibrium price. The condition constraints the investors' demand to equalize the stochastic risky asset supply. Assuming the number of informed and noise traders grow at rate $\xi$ and normalizing the initial population of each group to unity, the per capita amount of share supplied is equal to the share per informed traders (see C.K. (1993)). Therefore investors' demand must sum to $1+\Theta$ when market clearing condition applies 
\begin{equation} 
\label{market-clearing}
\tilde{\Psi}(t)=1+\Theta(t). 
\end{equation}
According to Equation (\ref{market-clearing}) the coefficients of $\tilde{\Psi}(t)$ must satisfy the following equalities
\begin{equation} 
\label{investor-strategy}
\psi_{0}=1,\quad \psi_{D_{0}}=0,\quad \psi_{D_{1}}=0,\quad \psi_{D_{I}}=0, \quad \psi_{\Theta}=1
\end{equation}
that are used in the optimization problem (\ref{optimal-demand}--\ref{U-X-Y-equ.}) to determine the price coefficients $[p_{0}, p_{D_0},$ $ p_{D_1}, p_{I}]$. Equation (\ref{U-X-Y-equ.}) verifies that the conjectured form of the equilibrium price (\ref{Price-Conjectured}) is the optimal one. On this account the investors' optimization problem should be more appropriately interpreted as the determination of the risky asset price that makes the rational investors' equilibrium demand for the risky asset optimal. Theorem (\ref{theorem-optimization}) and market clearing condition (\ref{market-clearing}) determine an equilibrium price of two types: the efficient price (Equilibrium-Type A) and inefficient price (Equilibrium-Type B). The latter differs from the former when at least one of its coefficient deviates from the ones determined by the efficient condition.

\paragraph*{Equilibrium-Type A: Efficient Equilibrium Price}

The efficient equilibrium price states the fundamental value $V(t)$ is the expected present value of dividend and non-dividend at the risk-less rate $r$.
\begin{proposition}
The economy defined in Equations (\ref{Dividend-equ.}--\ref{cara-utility}) has a stationary rational expectations equilibrium in which price is efficient:
\begin{align}
\label{Price-Equilibrium-Type-A}
\tilde{P}(t)=\:& \tilde{V}(t) + \tilde{p}_{0}+\Theta(t)\\ \nonumber
        =\:&\tilde{p}_{0}+\tilde{p}_{D_0}D_{0}(t)+\tilde{p}_{D_{1}}D_{1}(t)+\tilde{p}_{I}I(t)+\Theta(t)
\end{align}
where
\[
\tilde{V}(t) = E_{t} \intop_{s=0}^{\infty} e^{-r s}D^{u}(t+s)\,ds = E_{t} \intop_{s=0}^{\infty}e^{-(r-\xi)s}D(t+s)\,ds 
\]
and price $\tilde{P}(t)$ has the following coefficients:
\begin{align}
\label{V-efficient}
\tilde{p}_{0}=-\left(\frac{[ (r-\xi+\alpha_{I})^{2}- 2 (r-\xi) \alpha_{I}\rho_{I}]\sigma_{0}^{2}} {(r-\xi)^{2}(r-\xi + \alpha_{I})^{2}}+\frac{\sigma_{D}^{2}}{(r-\xi+\alpha_{D})^{2} } \right) \frac{r}{r-\xi}\varphi \\
\tilde{p}_{D_0} \equiv  \frac{1}{r-\xi}, \qquad \tilde{p}_{D_{1}} \equiv \frac{1}{r-\xi+\alpha_{D}}, \qquad \tilde{p}_{I}\equiv \frac{1}{r-\xi}-\frac{1}{r-\xi+\alpha_{I}}
\end{align}
\end{proposition}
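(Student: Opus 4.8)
The plan is to verify that the stated coefficients constitute a fixed point of the rational-expectations mechanism: the conjectured price in Equation (\ref{Price-Conjectured}), once fed through the optimization of Theorem \ref{theorem-optimization} and the market-clearing condition (\ref{market-clearing}), must reproduce itself. I would split the argument into the three slope coefficients (which encode informational efficiency) and the constant $\tilde{p}_{0}$ (which encodes the risk premium).

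First I would establish the slope coefficients by direct computation of the fundamental value $\tilde{V}(t)=E_{t}\int_{0}^{\infty}e^{-(r-\xi)s}D(t+s)\,ds$. Because the drifts in (\ref{Dividend-equ.}) and (\ref{Info-equ.}) are linear, the conditional means solve elementary ODEs: writing $\kappa\equiv r-\xi$, one gets $E_{t}[I(t+s)]=I(t)e^{-\alpha_{I}s}$, $E_{t}[D_{1}(t+s)]=D_{1}(t)e^{-\alpha_{D}s}$, and, integrating the drift $\alpha_{I}I$, $E_{t}[D_{0}(t+s)]=D_{0}(t)+I(t)(1-e^{-\alpha_{I}s})$. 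Substituting and integrating $\int_{0}^{\infty}e^{-\kappa s}(\cdot)\,ds$ termwise reads off $\tilde{p}_{D_0}=1/\kappa$, $\tilde{p}_{D_1}=1/(\kappa+\alpha_{D})$, and $\tilde{p}_{I}=1/\kappa-1/(\kappa+\alpha_{I})$, matching (\ref{V-efficient}). As a consistency check I would confirm the efficiency identity $D+E_{t}[d\tilde{V}]/dt=\kappa\tilde{V}$ holds with discount rate $\kappa$, which certifies that the fundamental component is genuinely the expected present value of dividends discounted at $r-\xi$.

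Next I would pin down the constant $\tilde{p}_{0}$ from the market-clearing requirement $\psi_{0}=1$ in (\ref{investor-strategy}). The cleanest route exploits the CARA--normal structure: the equilibrium risk premium that induces investors to hold the unit net supply equals the absolute risk aversion $\varphi$ times the instantaneous conditional variance of the fundamental value of one share, rescaled by the factor $r/\kappa$ arising from the infinite-horizon discounting and de-trending. Computing $\mathrm{Var}[d\tilde{V}]/dt$ from the diffusion of $\tilde{V}=\tilde{p}_{D_0}D_{0}+\tilde{p}_{D_1}D_{1}+\tilde{p}_{I}I$ --- crucially using the \emph{negative} loading $-\rho_{I}\sigma_{0}$ of $dI$ on $dw_{0}$ recorded in $B^{1/2}$ of (\ref{A-B-Equation}) --- gives $(\tilde{p}_{D_0}-\tilde{p}_{I}\rho_{I})^{2}\sigma_{0}^{2}+\tilde{p}_{I}^{2}(2\rho_{I}-\rho_{I}^{2})\sigma_{0}^{2}+\tilde{p}_{D_1}^{2}\sigma_{D}^{2}$. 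After the simplification $[(\kappa+\alpha_{I})-\alpha_{I}\rho_{I}]^{2}+\alpha_{I}^{2}(2\rho_{I}-\rho_{I}^{2})=(\kappa+\alpha_{I})^{2}-2\kappa\alpha_{I}\rho_{I}$, this variance collapses to exactly the bracketed expression in (\ref{V-efficient}), and multiplying by $-\varphi\,r/\kappa$ yields $\tilde{p}_{0}$. Note that no $\sigma_{\Theta}^{2}$ appears, consistent with the premium compensating only the dividend-and-information risk of the unit supply.

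The remaining, and genuinely load-bearing, step is to certify that these coefficients actually satisfy the optimization of Theorem \ref{theorem-optimization} together with the three \emph{slope} clearing conditions $\psi_{D_0}=\psi_{D_1}=\psi_{I}=0$ and $\psi_{\Theta}=1$. This is where the main obstacle lies: the optimal demand (\ref{optimal-demand}) is built from the solution $L$ of the algebraic Riccati equation (\ref{Xi-L-equ.}), so one must exhibit a symmetric $L$ for which the state-dependent (myopic plus intertemporal-hedging) parts of the demand cancel down to the pure $1+\Theta$ form. I would handle this by substituting the efficient slopes into the return coefficients $e_{j}$ of (\ref{dQ-return}) and showing that the state-dependence of the expected excess return --- which, as noted after (\ref{dQ-return}), is present through all variables once $\xi\neq0$ --- is exactly offset by the hedging terms generated by $L$, reducing the clearing system to equations already satisfied by (\ref{V-efficient}). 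The bookkeeping is heavy because $L$ is a $5\times5$ symmetric matrix, but the problem inherits the closed-form structure of Campbell and Kyle (1993): the efficient price is the known special solution of the Riccati system, so the verification reduces to checking that specialization rather than solving the quadratic matrix equation afresh.
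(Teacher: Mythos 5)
Your proposal is correct and follows, for the most part, the same route as the paper's Appendix B: the slope coefficients are obtained there exactly as you do it --- conditional-mean ODEs (packaged in the matrix exponential $H(s)=e^{A_{1}s}$), then termwise integration against $e^{-(r-\xi)s}$ --- and the constant is likewise pinned down by imposing $\tilde{\Psi}=1$ in the demand formula (\ref{optimal-demand}). The one substantive difference is which variance you feed into the premium. The paper substitutes the efficient slopes into $T=T^{1/2}(T^{1/2})^{\top}$ with $T^{1/2}=\bar{P}B^{1/2}$, i.e.\ the instantaneous variance of the \emph{price} (its displayed $T$ even carries a $+\sigma_{\Theta}^{2}$ term that then silently disappears from the final $\tilde{p}_{0}$), kills the hedging term via the terse assignment $B^{1/2}=0$, and extracts $\tilde{p}_{0}=-r\varphi T/(r-\xi)$ from $S=-(r-\xi)p_{0}$. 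You instead compute $\mathrm{Var}[d\tilde{V}]/dt$, the variance of the fundamental component only, which omits $\sigma_{\Theta}^{2}$ from the outset and lands exactly on the stated formula (\ref{V-efficient}); on this point your computation is more internally consistent than the paper's own, though your factor $r/\kappa$ (with $\kappa=r-\xi$) is asserted by appeal to ``discounting and de-trending'' where the paper obtains it mechanically from combining $S=-(r-\xi)p_{0}$ with the clearing requirement $S=r\varphi T$. Your algebraic identity $[(\kappa+\alpha_{I})-\alpha_{I}\rho_{I}]^{2}+\alpha_{I}^{2}(2\rho_{I}-\rho_{I}^{2})=(\kappa+\alpha_{I})^{2}-2\kappa\alpha_{I}\rho_{I}$ is precisely the simplification implicit in the paper's expression for $T$, and your choice of the negative loading $-\rho_{I}\sigma_{0}$ from (\ref{A-B-Equation}) over the positive sign in (\ref{Info-equ.}) resolves the paper's sign inconsistency in the direction needed for the stated $\tilde{p}_{0}$. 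Finally, the step you flag as genuinely load-bearing --- exhibiting an $L$ solving (\ref{Xi-L-equ.}) under which the state-dependent parts of (\ref{optimal-demand}) reduce to $\psi_{D_{0}}=\psi_{D_{1}}=\psi_{I}=0$, $\psi_{\Theta}=1$ --- is not carried out in the paper either: Appendix B compresses the entire hedging cancellation into the unexplained ``$B^{1/2}=0$'' and stops after the constant. So your sketch of that verification is at the same level of rigor as, and considerably more candid than, the published proof.
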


\begin{proof} See Appendix B.
\end{proof}

The constant term is obtained when informed investors maximize their objective function (\ref{Investors'-objective-function}) using the price coefficients in the form of Equation (\ref{V-efficient}). Note that $\tilde{p}_{0}$ depends on the fundamental parameters $\left(\alpha_{D},\,\alpha_{I},\,\rho_{I},\,\sigma_{0}, \,\sigma_{D}\right)$ and mainly by the investors' risk aversion $\varphi$ which affects the expected return on the stock by increasing risk premium with a higher negative term. As shown in C.K. (1993) and Wang (1993), this is a simple discount on the price to account for the increasing discount rate. Finally note that it is possible to extract from (\ref{V-efficient}) the algebraic form of $\gamma$ given in Equation (\ref{Price-Equation}) and, indirectly, having a measure of the investors' risk aversion using real data.

\paragraph*{Equilibrium-Type B: Inefficient Equilibrium Prices}

An informationally inefficient equilibrium price is any equilibrium price who deviates from its fundamental value, defined in Equilibrium-Type A. The economy defined in Equations (\ref{Dividend-equ.}--\ref{cara-utility}) has a stationary REE in which the equilibrium price is informationally inefficient given by
\begin{equation}
\widehat{P}=\widehat{p}_{0}+\widehat{p}_{D_{0}}D_{0}(t)+\widehat{p}_{D_{1}}D_{1}(t)+\widehat{p}_{I}I(t)+\Theta(t)
\end{equation}
when at least one of the price coefficients is not equal the efficient ones shown in Equation (\ref{V-efficient}).

\subsection{Utility Function}
There are several candidates equilibrium price, which correspond to Equilibrium-Type A or Equilibrium-Type B, for any given set of exogenous parameters. Each candidate is solution of the infinite-horizon optimization problem and the Walrasian auctioneer Pareto rank the candidate equilibrium prices according to the Utility criteria. Finally, it considers the one with the highest Utility for the informed investors as the equilibrium price.
The value function (\ref{Investors-Value-Function}) has the following form when Theorem (\ref{theorem-optimization}) is verified:
\begin{align}
J(Z,L,\lambda,t) & =\lambda+\frac{1}{2}l_{1,1}+f[D_{0}(t),D_{1}(t),I(t),\Theta(t)]
\end{align}
where the constant term $\lambda+\frac{1}{2}l_{1,1}$ is different for each candidate equilibrium price and represents the ``essential Utility part'' of the value function. The ``essential Utility part'' selects among different equilibrium candidates is $\tilde{J}(L, \lambda)  \equiv \lambda+\frac{1}{2}l_{1,1}$. Since we are in a suitable neighborhood of the origin of the Euclidean space of the states of the economy $(D_{0},D_{1},I,\Theta)$ we have $Z L Z \simeq l_{1,1}$, such that 
\[
-e^{-\beta t-r \varphi Y+\frac{1}{2}Z^{\top}LZ - \lambda} \simeq \beta t + r \varphi Y + \tilde{J}(L, \lambda).
\]
The higher is the ``essential Utility part'' of the expected Utility, the higher is the expected Utility itself in the considered neighborhood.
\medskip

To put in evidence how result is obtained and what is the corresponding economic interpretation. 
Under a technical point of view, result is driven by a maximization of the investment problem using a price a-priori inefficient (Equilibrium-Type B) and different market conditions. For any market condition there are several candidates equilibrium price due to the high non-linearity of equation (\ref{Xi-L-equ.} - \ref{U-X-Y-equ.}). The ``essential Utility part'' helps to select among them the price giving the investors the highest Utility. As long as risk-aversion and noise volatility increases the Equilibrium-Type B Pareto dominates the Equilibrium-Type A.

The economic interpretation of this result is as follows: (i) investors always find the profitable price for their investment problem, given any market conditions; (ii) as long as they perceive a \textit{low} level of risk in the market, they get more Utility trading with noise an efficient price; (iii) in turn, a change in risk perception lead the profitable condition to an inefficient price; (iv) investor ask a higher risk-premium as a compensation of the higher perceived risk. Last, only a change of risk perception can modify the obtained Equilibrium-Type from the efficient to the inefficient one, and the contrary.

As a consequence of this result, there exists a critical threshold of price variance, below (above) which price converges to an informationally efficient (inefficient) equilibrium. The threshold depends on private and public information, and on the noise asset volatility. According to our results, when noise volatility represents a bigger proportion of price volatility price loose is efficiency and investors ask a bigger risk premium with respect to the one asked if the efficient price had been the equilibrium one. 

\section{Numerical Solutions}
\label{Numerical-Solution}
This section shows with numerical routines the candidates equilibrium price and the correspondent ``essential Utility part.'' 
The exogenous parameters of the model are as follow\footnote{\, The choice of the parameters follows Wang (1993)}:
\[
\begin{array}{c}
r=0.05; \quad \xi=0.011; \quad \beta=0.30; \quad \varphi=0.50; \\
\alpha_{D}=0.50; \quad \alpha_{I}=0.40;  \quad \alpha_{\Theta}=0.05; \\
\sigma_{0}=0.50; \quad \sigma_{D}=0.10; \quad \sigma_{I}=0.40; \quad \sigma_{\Theta}=0.50.
\end{array}
\]
Table I shows several candidates equilibrium price and the one with the highest ``essential Utility part'', the equilibrium price in in grey. It is important to stress that the equilibrium price obtained numerically is the same one derived algebraically in Proposition (1) where
\begin{align*}
\tilde{p}_{0}=&-91.773, \qquad \qquad \qquad \qquad \qquad\, \tilde{p}_{D_{0}}= \frac{1}{r-\xi} = 25.641,\\
\tilde{p}_{D_{1}}=&\frac{1}{r-\xi+\alpha_{D}} = 1.855,\qquad \qquad \qquad  \tilde{p}_{I}=\frac{\alpha_{I}}{(r-\xi)(r-\xi+\alpha_{I})}=18.446.
\end{align*}

\medskip
\begin{center}
[Insert Table I]
\end{center}
\medskip

The efficient equilibrium price Pareto dominates the inefficient ones. The constant term $\tilde{p}_{0}=-91.773$ is negative, as requested by the model to account for the expected return on the stock for risk-averse informed investors. The other candidates equilibrium price are inefficient because they present a lower (higher) value of $p_{I}$ or $p_{D_{0}}$ with respect to the full-informative price. As a consequence of this inefficiency investors request a higher (lower) value of discount term (risk premium) as a compensation for inefficiency. 

Table II shows the candidate equilibrium prices when investors' risk aversion changes to unity.\footnote{\, We double the risk aversion parameter with respect to Table I} Among the candidate equilibrium prices, the one with the highest ``essential Utility part'' ($\lambda + \frac{1}{2}l_{1,1}=43.96$) is an inefficient one with coefficients:
\[
\widehat{p}_{0}= -2664.632, \quad \widehat{p}_{D_{0}}= -89.311,\quad \widehat{p}_{D_{1}}=1.855,\quad \widehat{p_{I}}=-13.384
\]
in which $\widehat{p}_{D_{0}}=-89.311$ and $\widehat{p_{I}}=-13.384$, the coefficients associated with the permanent component of dividend and with private information, deviate from their efficient values given by $\tilde{p}_{D_{0}}= \frac{1}{r-\xi} = 25.641$ and $\tilde{p}_{I}=\frac{\alpha_{I}}{(r-\xi)(r-\xi+\alpha_{I})}=18.446$. 

\medskip
\begin{center}
[Insert Table II]
\medskip
\end{center}

The negative coefficients represents the investors' willingness to negatively correlate the private and public information flows with price. Finally, investors ask a higher discount: $\widehat{p}_{0}= -2,664.632$ as a compensation for the inefficient equilibrium price achieved: the higher discounted term demanded might explain the well-known equity risk premium.

Table III shows an analogous result when noise volatility increases, i.e., $\sigma_{\Theta}$ is unity. The equilibrium price is characterized by the following coefficients:
\[
\widehat{p}_{0}= -465.202, \quad \widehat{p}_{D_{0}}= -80.445,\quad \widehat{p}_{D_{1}}=1.855,\quad \widehat{p_{I}}=-87.639
\]
where the permanent component coefficient $\widehat{p}_{D_{0}}=-80.445$ and private information coefficient $\widehat{p_{I}}=-87.639$ deviate from the efficient reference values. 

\medskip
\begin{center}
[Insert Table III]
\medskip
\end{center}

There is still a negative value of $p_{D_{0}}$ to account that investors' demand is anti-correlated with the permanent component of dividend. To remark that it does not imply that price moves downward when dividend increases because the coefficient measures the price reaction to dividend news. On the other hand, the price dynamics is mainly driven by the constant part. 

Figure 1 (\ref{Figure1}) shows the ``essential Utility part'' of the efficient (Equilibrium-Type A) and inefficient (Equilibrium-Type B) prices as function
 of risk aversion and noise volatility. For $\varphi =0.1$ and $\sigma_{\Theta}=0.1$ the equilibrium price is the efficient one as shown by the higher Utility (red area). An increase of noise volatility to $\varphi = 3.0$ let the efficient one still Pareto dominate the inefficient ones. Contrariwise, with $\varphi = 0.5$ the equilibrium price depends only on the values of noise volatility. A further conclusion is that risk aversion has a stronger effect on that result with respect to noise volatility.

\paragraph*{Robustness check.} Numerical results are obtained letting the machine search for mathematical solutions using the Newtonian method. We initialize the research letting the price coefficients and each element of Equation (\ref{U-X-Y-equ.}) range over $(-10;10)$. All equations and starting values are real, and the research is given only for real roots.  Results are controlled by expanding the range and using the method of secant. A significant change in the exogenous parameters of the model does not impact the final result. Finally, a combination of high risk aversion and high price volatility solely lead the result. This is verified increasing the dividend ($\sigma_{0}, \sigma_{D}$) and private information volatility ($\sigma_{I}$).  

\section{Econometric Method and Empirical Results}
\label{estimation}
\subsection{Preliminary Data Analysis}

The estimates of the model are performed using the annual U.S. time-series data taken from Shiller (2000) and used in other works such as C.K. (1993) and Campbell and Shiller (1987, 1988). Dataset consists of monthly stock price, the corresponding dividend data, and a price index during the period January 1871--December 2009.\footnote{\,  The dataset was retrieved from Robert Shiller's website at http://www.econ.yale.edu/shiller/. } The real stock price is the Standard and Poors Composite Stock Price Index multiplied by the CPI-U (Consumer Price Index-All Urban Consumers) in June 2010, and divided by the corresponding CPI. The procedure is applied to the corresponding S\&P Stock Price dividend per share to obtain the real dividend series (see, Shiller (2000)). January of each year is considered as the annual data point to avoid problems with time aggregation.

The real unadjusted price and dividend are $P_{t}^{u}$ and $D_{t}^{u}$ to distinguish from the de-trended values $P_{t}$ and $D_{t}$, as in Equation (\ref{D_U-P_U}). The value $\xi=0.0115$ is the average mean dividend growth rate over the all sample. The de-trended operation aims to remove the exponential growth from the \textit{ex-ante} mean of data without forcing data to revert to a trend line. Secondly, it gets rid of the exponential growth from the variance of data, a rescaling effect similar to a logarithm transformation. Finally, $D_{t}$ and $P_{t}$ are normalized such that the mean of price is equal to one by dividing each time series with the mean of price. Figure 1 plots the de-trended real price and dividend $\times$ 10.

Table IV presents the main statistics of dataset and Table V presents the results of ADF, PP, and KPPS test for stationarity both for $P_{t}$ and $D_{t}$, and for $P_{t}^{u}$ and $D_{t}^{u}$. 

\medskip
\begin{center}
[Insert Table IV]

[Insert Table V]
\medskip
\end{center}

The same test are computed for ln$(P_{t}^{u})$ and ln$(D_{t}^{u})$ as a control. Unit roots test show that price and dividend have unit roots in the first level. The results for ADF, with five lags, and PP test for $P_{t}$, $P_{t}^{u}$, and ln$(P_{t}^{u})$ do not always reject the null hypothesis of nonstationarity, while KPPS always rejects stationarity. The ADF test for $D_{t}$, $D_{t}^{u}$, and ln$(D_{t}^{u})$ does not reject the null in presence of a trend at the 5\% level. KPSS confirms the unit root in level of dividend. The PP test accounts for autocorrelation of the error term and, it shows that dividend is stationary with trend.\footnote{\, Assuming a unit root in dividend implies that noise does not help to explain the stock price volatility (as shown by Kleidon (1986), Marsh and Merton (1986), and Timmermann (1996)).} In light of the results above, it is assumed a unit root in the stock price and dividend time series, as shown in C.K. (1993) and Campbell and Shiller (1987, 1988). Finally note that the de-trended operation does not have any effect on the unit root assumption. 

Table VI reports other time series properties of data such as the sample correlation of the integrated process $\Delta P_{t}$, $\Delta D_{t}$ until the fifth lags and, in the bottom of the table, the sample standard deviations of $\Delta P_{t}$ and $\Delta D_{t}$. 

\medskip
\begin{center}
[Insert Table VI]
\medskip
\end{center}

The correlation analysis suggests that price and dividend have a mean reverting component: this is due to the positive first autocorrelation (0.14 and 0.22, respectively) while the other autocorrelations are negatives.\footnote{\, The autocorrelation sample in 1871--1988 supports the mean reverting component only for dividend, as C.K. (1993) show in their work.} The result supports the rejection of the null hypothesis of unit root for dividend. The cross correlation between the dividend change from the end of one year to the end of the next one, $\Delta D_{t}$, and the corresponding price change, $\Delta P_{t}$, shows a very low value at the contemporaneous level (0.03), and at the third level (0.06). On the other hand, there is a high correlation (0.44) between $\Delta D_{t}$ and $\Delta P_{t-1}$, supporting that only price changes between $P_{t}$ and $P_{t-2}$ help to explain changes in actual dividend. In conclusion, there is evidence that an hidden variable, the so called ``private information'' $I_{t}$, might help to explain the relationship between price and dividend. 

The model assumes that price and dividend have one unit root, they are integrated processes of order one $I(1)$, and that a linear combination is stationary [they are $I(0)$]. The existence of cointegration is tested with Engle-Granger two-step method (where the null is no cointegration, and the residual is a random walk). Table VII shows the regression result and the ADF test on the residuals. 

\medskip
\begin{center}
[Insert Table VII]
\medskip
\end{center}

The estimate $D_{t}$ = 0.073 + 0.012$P_{t}$ using the heteroskedasticity-robust standard error reports that two-thirds of dividend's mean is explained by the constant term. The ADF test rejects the unit root hypothesis at $5\%$ supporting the existence of a linear combination of $P_{t}$ and $D_{t}$. In turn, the coefficient regressor ($0.012$) is equal to $(r-\xi$), the interest rate less the dividend growth rate ($\xi$ = 0.011), implies a low interest rate equals to 2.4\%; a low level justified by the high constant term. The regression without a constant gives $D_{t}$ = 0.022$P_{t}$, which implies an interest rate of 3.3\%. Reversing the estimate the regression is $P_{t}$ = $-$3.113 + 57.003$D_{t}$, such that $(r-\xi$) = 0.017 and $r = 2.9\%$. The latter value is closer to the mean rate of return on the stock index $(3.4\%)$. In conclusion, the model considers $r=3\%$ and as a control $r=6\%$ and $r$ free to account different interest rate during the long period used in the estimates.

\subsection{Estimates of the Model}

The model is set up in continuous time while dataset is in discrete time. The use of an exact discrete analog, according to the procedure originally introduced by Bergstrom (1966, 1983) and recently discussed in McCrorie (2009), allows the estimates of the model. A different procedure such as in C.K. (1993) estimates the discrete time transformation of the original continuous time model when it is possible to show that the stacked vector of point-sampled and time-averaged transformation of the continuous-time variables follow a discrete-time AR(1) process. In turn, the choice of the exact discrete analog follows from the benefits of the state space approach: price and dividend are the observed variables, or the measurement equation of the state space model, while $Z(t)$ is the vector of state variables representing the transition equation. The state space dynamics is in compact form as
\begin{align}
\label{compact-form-1}
Y_{t} =\:& C(\mu) Z_{t}\\ \nonumber
dZ_{t} =\:& A(\theta)Z_{t}dt + B^{1/2}(\theta)\,dw(t)
\end{align}
where the vector $Y_{t}=[P_{t},D_{t}]$, the matrix $C(\mu)$ contains the price coefficients $[p_{0}, \, p_{D_{0}}, \, p_{D_{1}}, \, p_{I}, \, 1]$, $\{A(\theta)$, $B(\theta)\}$ are matrices containing the unknown parameters $\theta=[\alpha_{D},\, \alpha_{I},\, \alpha_{\Theta}, \, \sigma_{0}, \, \sigma_{D}, \, \sigma_{I}, \, \sigma_{\Theta}]$ and, $dw(t)$ is the vector of independent brownian motions. Appendix B reports the exact discrete matrices of the continuous time model and the corresponding likelihood function. Kalman filter is used to extract the hidden information and to compute the estimates of the parameters. Data are non-stationary such that it is used a non-informative (diffuse) prior distribution for the corresponding parameters. The filter is initialized assuming the unconditional mean of each state variables as zero and an arbitrarily large covariance matrix as suggested by De Jong (1991).\footnote{\, The variance covariance matrix is a diagonal matrix with $10^{6}$ on the diagonal.}

\subparagraph{Data 1871--2009}
Table VIII shows the Maximum Log-Likelihood (ML) values of the estimates of model A, where it is assumed the informationally efficient price form as given in Equation (\ref{V-efficient}), and Equilibrium-Type B, in which there are not constraint into the price coefficients. 

\medskip
\begin{center}
[Insert Table VIII]
\medskip
\end{center}

Each row of the table presents different assumptions about the interest rate, while the columns report the absence or the presence of market noise. The main findings are: (i) Equilibrium-Type A has higher MLs for any assumption about interest rate even assuming there are no noise traders in the market; (ii)  noise traders strongly improve the goodness of the model, as shown in C.K. (1993), when interest rate equals $3\%$ and $6\%$; (iii) the assumption of  full noise and interest rate equals $3\%$ let the estimate of Equilibrium-Type A (ML equals 834.88) be higher than Equilibrium-Type B (ML equals 830.65); (iv) the previous result is confirmed when interest rate is assumed equals to $6\%$ or when we estimate it separately. 

Table IX shows the estimates of price coefficients and two results are straightforward: (i) the constant term $p_{0}$ is negative as shown by the theoretical model to account for the risk aversion of the investors; (ii) the estimate of the interest rate equals $3.09\%$ (with ML = $ 831.62$) when we assume an efficient price and full noise; this is in line with the implied interest rate obtained in the regression.

\medskip
\begin{center}
[Insert Table IX]
\medskip
\end{center}

Finally, the overall estimates reflect that S\&P 500 Index correctly reports the fundamental value during the period 1871--2009 and the EMH prevailed. It is important to emphasize that during the long period there were at least two structural breaks that are not considered in the estimates of the model. In turn, the estimates are used to confirm the informational efficiency of the S\&P 500 Index and, to compare the results with those of C.K. (1993).

\subparagraph{Dot.com Bubble}
The theoretical results given in Section \ref{Numerical-Solution} have some evidences on real financial data using monthly prices and dividends of the S\&P 500 Index during 1995--2000. Table X shows unit root tests values at 5\% level: price and dividend have unit roots in level and they are stationary if differenced of order one. In the bottom of Table X we test the cointegration using Johansen's methodology. 

\medskip
\begin{center}
[Insert Table X]
\medskip
\end{center}

The values of $\lambda \textrm{max} = 30.55 \, (15.67)$ and $\lambda \textrm{trace} = 31.79 \, (19.96)$ reject the null hypothesis at 5\% level of zero cointegrating vector and they accept the hypothesis of one vector of cointegration. The regression price on dividend is $D_{t} = 0.049 + 0.0017 P_{t}$, using the heteroskedasticity-robust standard error, therefore using $\xi= 0.0011$ the implied interest rate is $r=0.0028$.  As C.K. (1993) have shown, the low interest rate value is justified by the high value of $\lambda$. The regression without the constant let $r=1.5\%$ that is more suitable. In any way, the model estimates are performed assuming the interest rate equals to $r=1.5\%, 3\%$, and be free.

Table XI presents the main empirical results: the ML of Equilibrium-Type B is always higher than ML of Equilibrium-Type A. The higher ML is obtained using $r=1.5\%$ that is closer to the implied interest rate. 

\medskip
\begin{center}
[Insert Table XI]
\medskip
\end{center}

The Likelihood Ratio test is used to test the fitting ability of the two models. In turn, LR-test rejects at $1\%$ the null hypothesis of Equilibrium-Type A, in favor of Equilibrium-Type B, for any given assumption of the interest rate. 

Table XII reports the price coefficients of Equilibrium-Type A and Equilibrium-Type B for different assumptions about interest rate. 

\medskip
\begin{center}
[Insert Table XII]
\medskip
\end{center}

There is evidence that the constant term is negative as suggested by the model and, the estimate of $r$ when is free is very low ($r=0.002$). This is in line with the preliminary analysis, although  the assumption $r=1.5\%$ has been considered more reliable. 

The estimates of the parameters shown in Table XIII support the theoretical results: (i) the constant term estimated according to Equilibrium-Type B is lower ($p_{0}=-0.061$) with respect to the constant term of Equilibrium-Type A ($p_{0}=-0.177$) thus risk-averse investors demand a lower risk premium when price is inefficient and, as a consequence, price increases (as the positive asset bubble shown in Figure 2); (ii) the permanent component of the dividend process is informationally efficient ($p_{D_{0}}=71.942$ and $p_{D_{0}}=71.944$ in Equilibrium-Type A and Equilibrium-Type B, respectively) while investors estimate a lower value of their private information ($p_{I}=0.001$) and the permanent component of dividend ($P_{D_{1}}=1.089$): this is confirmed with the calibration exercised computed in Section (\ref{Numerical-Solution}); (iii) the low value of $p_{I}$ shows that private information decays very fast ($\alpha_{I}=0.905$): investors cannot exploit their private information for a long period; (iv) the inefficient price has the highest ML when investors has a high noise volatility with respect to dividend and information: $\sigma_{\Theta}=0.124$ (see Table XIII), an evidence of the theoretical results. 

\medskip
\begin{center}
[Insert Table XIII]
\medskip
\end{center}

\section{Conclusion}
\label{Conclusion}

The main result of this work is the existence of two types of equilibrium prices both solutions of the optimal investment of the informed risk-averse investors: the semi-strong efficient one, in which price reflects the fundamental asset value, and the inefficient ones where the quality informativeness of price is lost. The model shows that the equilibrium price, selected among multiple candidate prices as the one with highest utility for the informed investors, is determined by the investors' market risk perception which is measured in terms of risk aversion and noise volatility. The efficient price Pareto dominates (is dominated by) the inefficient one when risk aversion is \textit{low} (\textit{high}) and when noise volatility is \textit{low} (\textit{high}). In conclusion, a change in the market risk perception drives the type of equilibrium from efficiency to inefficiency.  

The second result of the work is the existence of a critical threshold of price volatility below (above) which investors prefer a price informationally efficient (inefficient). The threshold depends on the informative component owned by the informed investor, the private and public information, and depends on the noise contribution to asset volatility. As long as the fundamental value is the main driver of the asset volatility the EMH holds; contrariwise, an higher proportion of noise driving the price volatility lead investors to prefer an informationally inefficient price.

The third result of this work is that using real data, the estimates of the model confirm the theoretical findings. The S\&P 500 Index reflects the fundamental value, given by dividend and hidden private information, during the long period 1871--2009. On the other hand, data do not support the market efficiency hypothesis in the sub-period 1995--2000 and they endorse the main models' theoretical findings. The higher maximum log-likelihood of the model under inefficient price assumption rejects, using the likelihood ratio test, the null hypothesis of an efficient market at 1\%. The result is confirmed under different assumptions of interest rate. Furthermore (i) the estimated inefficient price shows that investors demand a lower risk premium, which according to the theory leads to a price increase shown in the Dot.com bubble; (ii) investors estimate a very fast decay of the private information on price, which leads to lower sensitivity of their private information on the price supporting the role of noise; (iii) a high value of noise volatility supports the developed theory.

In conclusion, the inefficient equilibrium price captures some market anomalies observed in a real financial market when higher risk is involved: (i) the equity risk premium puzzle is explained by a higher discount rate, requested by the risk averse investors when inefficient price hold in the market; (ii) the higher discount rate moves in excess the price volatility with respect to the one observed whether EMH had held.

\newpage{}
\section*{Appendix A. Solution to Investors's Optimization Problem}
\textbf{Proof of Theorem 1.} The investor's optimization problem is solved conjecturing the value function (\ref{Investors-Value-Function}) is the investor's objective function (\ref{Investors'-objective-function}). The value function has the following form
\[
J(Z, W, t)=-e^{-(\beta t + r \varphi W - \Phi(Z) + \lambda)},
\]
where $\Phi(Z)=\frac{1}{2}Z^{\top}LZ$, $Z=\left(1,\, D_{0},\, D_{1}, \, I,\,\Theta \right)^{\top}$ is defined as the $(5$x$1)$ vector of the state variables and $L \equiv (l_{j,k})_{j,k=1}^{5}$. The dynamics of $Z(t)$ is written in compact form as in equation (\ref{compact-form}). The excess return per one share of stock is written in terms of the state vector $P(t)=\bar{P}Z(t)$ such that 
\begin{align*}
dQ=&(D(t)-(r-\xi)P(t))dt+dP(t) =(D(t)-(r-\xi)\bar{P}Z(t))dt+\bar{P}AZ(t)dt+\bar{P}B^{1/2}dw(t)\\
     =& SZ(t)dt+T^{1/2}dw(t)
\end{align*} 
where 
\[
S(t)\equiv M-\bar{P}(r-\xi)+\bar{P}A,\qquad M\equiv(0,1,1,0,0),\qquad\bar{P}\equiv(p_{0},p_{D_{0}},p_{D_{1}},p_{I},1),\qquad T^{1/2}\equiv\bar{P}B^{1/2}\] 
It is proved that
\begin{description}
\item[(i)] the function (\ref{Investors'-objective-function}) is a solution to the Bellman equation
\begin{equation}
\label{uninf.-Bellman-equ.}
\partial_{t}J(Z,W,t)+\max_{\Psi,c}\{\mathcal{G} J(Z,W,t)-e^{-(\beta t+\varphi c)}\}=0,
\end{equation}
where $\mathcal{G}$ is the infinitesimal generator of the diffusion process $(Z(t),W(t))$;
\item[(ii)] the control $(\mathring{\Psi}(t),\mathring{c}(t))$ satisfies
\[
(\mathring{\Psi}(t),\mathring{c}(t))\in\arg\max\{\mathcal{G} J(\mathring{Z}(t),\mathring{W}(t),t)-e^{-(\beta t+\varphi \mathring{c}(t) )}\}=0,
\]
where $(\mathring{Z}(t),\mathring{W}(t))$ is a solution to 
\begin{align}
\label{dZ-dW}
dZ(t)  =&AZ(t)dt+B^{1/2}dw(t)\\
dW(t) =&(rW(t)-c(t)+\Psi(t)SZ(t))dt+\Psi(t)T^{1/2} dw(t)\nonumber
\end{align}
corresponding to the choice of $(\mathring{\Psi}(t),\mathring{c}(t))$;
\item[(ii)] the trasversality condition
\begin{equation}
\label{trasversality-condition}
\lim_{T\rightarrow+\infty}\mathbf{E}_{Z,W,t} \left[ J(t+T, \mathring{Z} (t+T),\mathring{W}(t+T))\right]  =0,
\end{equation}
where $(\mathring{Z}(t),\mathring{W}(t))$ is a solution to (\ref{dZ-dW}).
\end{description}

To show that $J(Z,W,t)=-e^{-(\beta t + \Phi(Z) +\varphi r W + \lambda)}$ is a solution of (\ref{uninf.-Bellman-equ.}), the operator $\mathcal{G}$ is determined:
\begin{equation}
\begin{array}{c}
\label{uninf.-inf.-gen.}
\mathcal{G}  \equiv \frac{1}{2}\sum_{j,k=1}^{5}B_{j,k} \, \partial_{Z_{j},Z_{k}}^{2}+\Psi\sum_{j=1}^{5}T^{1/2}\left(B^{1/2}\right)_{j}^{\top}\,\partial_{W,Z_{j}}^{2}+\frac{1}{2}\Psi^{2}\,T\, \partial_{W,W}^{2} \\
 +\sum_{j=1}^{5}(AZ)_{j}\,\partial_{Z_{j}}+\left(  rW-c-\Psi S Z\right)
\,\partial_{W}.
\end{array}
\end{equation}
On the other hand, using $J$ as a shorthand for $J(Z, W, t)$
\begin{align*}
\partial_{Z_{j}}J    =-(Z^{\top}L)_{j} J, & \qquad \partial_{W}J =-r\varphi J, \qquad \partial_{Z_{j},Z_{k}}^{2}J =\left( LZZ^{\top}L-L\right)  _{j,k}J,\\
\partial_{Z_{j},W}^{2}J  &  =r\varphi(Z^{\top}L)_{j}\,J, \qquad \partial_{W,W}^{2}J  =r^{2}\varphi^{2}\,J.
\end{align*}
Therefore 
\begin{equation}
\begin{array}{c}
\label{uninf.-inf.-gen.-bis}
\mathcal{G} J =\frac{1}{2} \left( \sum_{j,k=1}^{5}B_{j,k}\, \left(LZZ^{\top}L-L\right) _{j,k}\right) \,J +\frac{1}{2}r^{2}\varphi^{2}T\Psi^{2}\,J +  \\
  r \varphi\left(\sum_{j=1}^{5}\left(T^{1/2}\left(  B^{1/2}\right)^{\top}\right)_{j} (Z^{\top}L)_{j}\right)  \Psi J -\left(  \sum_{j=1}^{5}(AZ)_{j} \, (Z^{\top}L)_{j}\right) J-r\varphi\left( r W - c - S Z \Psi \right) J.
\end{array}
\end{equation}
Thanks to the properties of the trace functional
\begin{equation}
\label{rem-1}
 \sum_{j,k=1}^{5}B_{j,k} \left( LZZ^{\top}L-L\right)  _{j,k}\,J=\mathbf{tr}\left((B^{1/2})^{\top} (LZZ^{\top}L-L)B^{1/2}\right) 
 =Z^{\top}LBLZ-\mathbf{tr}\left( \left(B^{1/2}\right)^{\top}LB^{1/2}\right).
\end{equation}
Moreover,
\begin{equation}
\label{rem-2}
\sum_{j=1}^{5}\left(  T^{1/2}\left(  B^{1/2}\right)  ^{\top}\right)_{j}\!(Z^{\top}L)_{j}=T^{1/2}\left(  B^{1/2}\right)  ^{\top}LZ,
\end{equation}
and
\begin{equation}
\label{rem-3}
\sum_{j=1}^{5}(AZ)_{j} \, (Z^{\top}L)_{j}=Z^{\top}LAZ.
\end{equation}
Combining (\ref{uninf.-inf.-gen.-bis}) with (\ref{rem-1})-(\ref{rem-3}), it follows
\begin{equation}
\begin{array}{c}
\label{uninf.-inf.-gen.-ter}
\mathcal{G}J  =\frac{1}{2}Z^{\top}LBLZ-\frac{1}{2}\mathbf{tr}\left(\left(  B^{1/2}\right)  ^{\top}LB_{u}^{1/2}\right)J +r\varphi T^{1/2}\left(  B^{1/2}\right)  ^{\top}LZ\,\Psi J+\frac{1}{2}r^{2}\varphi^{2}T\,\Psi^{2}J \\
 -Z^{\top}LAZ\,J-r\varphi\left(  rW-c-SZ\Psi\right)  \,J.
\end{array}
\end{equation}
The latter, on account of
\[
\partial_{t}J=-\beta J, \qquad Z^{\top}LAZ=\frac{1}{2}(Z^{\top}A^{\top}LZ+Z^{\top}LAZ)
\]
it is possible to rewrite Equation (\ref{uninf.-Bellman-equ.}) in the form
\begin{align}
&  \left(  -\beta+\frac{1}{2}Z^{\top}LBLZ-\frac{1}{2}(Z^{\top}A^{\top}LZ+Z^{\top}LAZ)-\frac{1}{2}\mathbf{tr}\left(  \left(B^{1/2}\right)  ^{\top}LB_{u}^{1/2}\right)  -\varphi r^{2}W\right)J\label{uninf.-Bellman-equ.-bis}\\
&  +\max_{\Psi,c}\{r\varphi((T^{1/2}\left(  B^{1/2}\right)  ^{\top}L+S)Z\Psi+\frac{1}{2}r\varphi T\Psi^{2})J+\varphi rcJ-e^{-(\beta t+\varphi c)}\}\nonumber\\
&  =0.\nonumber
\end{align}
Setting
\[
J\equiv J(t,Z,W,\Psi)\equiv \left( (T^{1/2}\left(  B^{1/2}\right)  ^{\top}L+S)Z\Psi+\frac{1}{2}r\varphi T\Psi^{2}\right)J,
\]
and
\[
K\equiv K(t,Z,W,c)\equiv r\varphi cJ-e^{-(\beta t+\varphi c)},
\]
it is possible to get
\begin{align*}
&  \max_{\Psi,c}\{r\varphi((T^{1/2}\left(  B^{1/2}\right)  ^{\top}L+S)Z\Psi+\frac{1}{2}r\varphi T\Psi^{2})J+\varphi rcJ-e^{-(\beta t+\varphi
c)}\}\\
&  =r\varphi\max_{\Psi}\{J(t,Z,W,\Psi)\}+\max_{c}\{K(t,Z,W,c)\}
\end{align*}
Maximizing $J$ [resp. $K$] with respect to $\Psi$, the first conditions are
\begin{align}\nonumber
\frac{dJ}{d\Psi}=&((T^{1/2}\left(  B^{1/2}\right)  ^{\top}L+S)Z+r\varphi\Psi T)J =0 \qquad \qquad \frac{dK}{dc}=r\varphi J+\varphi e^{-(\beta t+\varphi c_{u})} =0 \nonumber
\end{align}
that yields
\[
 \tilde{\Psi}=-\frac{(T^{1/2}\left(  B^{1/2}\right)  ^{\top}L+S)}{r\varphi T}Z \qquad \qquad \tilde{c}=\frac{\frac{1}{2}Z^{\top}LZ+r\varphi W+\lambda-\log(r)}{\varphi}
\]
which is the desired optimal demand (\ref{optimal-demand}) and optimal consumption (\ref{optimal-consumption}). 

Moreover, the second order condition
\[
 \frac{d^{2}J}{d\Psi^{2}}=r\varphi TJ \leq0    \qquad \qquad \qquad \qquad \quad \qquad \frac{d^{2}K}{dc^{2}}=-e^{-\varphi c} \leq0 \nonumber
\]
guarantees that $\Psi$ is optimal for $J$ and, similarly, $c$ is optimal for $K$. 
As a consequence,
\begin{align*}
&  \max_{\Psi}\{J(t,Z,W,\Psi)\} =-\frac{1}{2r\varphi T}\left(  Z^{\top}\left(  LB^{1/2}(T^{1/2})^{\top}+S^{\top}\right)  \left(  T^{1/2}\left(  B^{1/2}\right)^{\top}L+S\right)  Z\right)  J
\end{align*}
and
\begin{align*}
\max_{c}\{K(t,Z,W,c)\}  &  =r\left(  \frac{1}{2}Z^{\top}LZ+r\varphi W+\lambda-\log(r)+1\right)  J.
\end{align*}
In light of what shown above, the Bellman equation (\ref{uninf.-Bellman-equ.-bis}) takes the form
\begin{align*}
&  \left(  -\beta+\frac{1}{2}Z^{\top}LBLZ-\frac{1}{2}(Z^{\top} A^{\top}LZ+Z^{\top}LAZ)-\frac{1}{2}\mathbf{tr}\left(\left(B^{1/2}\right)^{\top}LB_{u}^{1/2}\right)  -\varphi r^{2}W\right) J \\
&  -\frac{1}{2T}\left(  Z^{\top}(LB^{1/2}(T^{1/2})^{\top}+S^{\top}\right)  \left(  T^{1/2}\left(  B^{1/2}\right)^{\top}L+S\right)  ZJ\\
&  +r\left(  \frac{1}{2}Z^{\top}LZ+r\varphi W+\lambda-\log(r)+1\right)J=0\\
\end{align*}
that is
\begin{align}
\label{uninf.-Bellman-equ.-ter}
&  \frac{1}{2}Z^{\top}\left(  LBL-\frac{1}{T}\left(  L^{\top} B^{1/2}(T^{1/2})^{\top}+S^{\top}\right)  \left(  T^{1/2}\left(B^{1/2}\right)  ^{\top}L+S\right) -A^{\top}L-LA+rL\right)
ZJ\\
&  +\left(  r\lambda+r(1-\log(r))-\beta-\frac{1}{2}\mathbf{tr}\left( \left(B^{1/2}\right)  ^{\top}LB^{1/2}\right)  \right)  J=0.\nonumber
\end{align}
On the other hand,
\begin{align}
&  LBL-\frac{1}{T}\left(  LB^{1/2}(T^{1/2})^{\top}+S^{\top}\right) \left(  T^{1/2}\left(  B^{1/2}\right)  ^{\top}L+S\right)  -A^{\top}L-LA+rL\label{rem-4}\\
&  =\frac{1}{T}\left(  LB^{1/2}\left( TI_{5}-\left(  T^{1/2}\right)^{\top}T^{1/2}\right)  \left(  B^{1/2}\right)  ^{\top}L\right)\nonumber\\
&  -\frac{1}{T}\left(L\left(B^{1/2}\left( T^{1/2}\right)^{\top}S+T\left(  A-\frac{1}{2}rI_{5}\right)  \right)+\left(  S^{\top}T^{1/2}\left(B^{1/2}\right)^{\top}+T\left(  A^{\top}-\frac{1}{2}rI_{5}\right)  L\right)  +S^{\top}S\right).\nonumber
\end{align}
Therefore, combining (\ref{uninf.-Bellman-equ.-ter}) with (\ref{rem-4}), it follows that $J(t,Z,W)$ is a solution of the Bellman equation
(\ref{uninf.-Bellman-equ.}), provided that the matrix $L$ and the parameter $\lambda$ are chosen to fulfill (\ref{Xi-L-equ.}) and (\ref{Xi-lambda-equ.}), respectively.

The trasversality condition (\ref{trasversality-condition}) holds true when the It\^{o}'s formula is applied to the identity
\begin{align*}
&  J(t+\Delta t,\mathring{Z}(t+\Delta t),\mathring{W}(t+\Delta t))-J(t,\mathring{Z}(t),\mathring{W}(t)) =\int_{t}^{t+\Delta t}dJ(s,\mathring{Z}(s),\mathring{W}(s)) 
\end{align*}
which allows to write
\begin{align*}
\label{J-Ito-formula}
J(t+\Delta t,\mathring{Z}(t+\Delta t),\mathring{W}(t+\Delta t))-J(t,\mathring{Z}(t),\mathring{W}(t))=&\int_{t}^{t+\Delta t}(\partial_{s}J(s,\mathring{Z}(s),\mathring {W}(s))+\mathcal{G}J(s,\mathring{Z}(s),\mathring{W}(s)))\,ds\\
 &+\int_{t}^{t+\Delta t}B_{Z,W}^{1/2}\nabla_{Z,W}J(s,\mathring {Z}(s),\mathring{W}(s))\,d\tilde{w}(s),
\end{align*}
where $B_{Z,W}^{1/2}$ stands for the diffusion matrix of the process $(\mathring{Z}(s),\mathring{W}(s))$ and $\nabla_{Z,W}$ denotes the gradient operator in the state space of $(\mathring{Z}(s),\mathring{W}(s))$. On the other hand, since $J\left(  t,Z,W\right)  $ is a solution of the Bellman equation (\ref{bellman-equ}) and $(\mathring{Z}(s),\mathring{W}(s))$ corresponds to an optimal control we get
\begin{align*}
&  \int_{t}^{t+\Delta t}(\partial_{s}J(s,\mathring{Z}(s),\mathring{W}(s))+\mathcal{G}J(s,\mathring{Z}(s),\mathring{W}(s)))\,ds =\int_{t}^{t+\Delta t}e^{-(\beta s+\varphi \mathring{c}(s))}\,ds.
\end{align*}
On account of the latter, applying the expectation operator on both the sides of (\ref{J-Ito-formula}) 
\begin{align*}
&  \frac{\mathbf{E}_{t,Z,W}\left[  J(t+\Delta t,\mathring{Z}(t+\Delta t),\mathring{W}(t+\Delta t))\right]  -\mathbf{E}_{t,Z,W}\left[  J(t,\mathring
{Z}(t),\mathring{W}(t))\right]  }{\Delta t}\\
&  =\frac{1}{\Delta t}\mathbf{E}_{t,Z,W}\left[  \int_{t}^{t+\Delta t}e^{-(\beta s+\varphi \mathring{c}(s))}\,ds\right],
\end{align*}
and, passing to the limit as $\Delta t\rightarrow0$, it follows
\[
\frac{d\mathbf{E}_{t,Z,W}\left[  J(t,\mathring{Z}(t),\mathring{W}(t))\right]}{dt}=\mathbf{E}_{t,Z,W}\left[  e^{-(\beta t+\varphi \mathring{c}(t))}\right].
\]
On the other hand, by virtue of $c$-first order condition,
\[
e^{-(\beta t+\varphi \mathring{c}(t))}=-\beta J(t,\mathring{Z}(t),\mathring{W}(t)).
\]
Hence, $\mathbf{E}_{t,Z,W}\left[  J(t,\mathring{Z}(t),\mathring{W}(t))\right] $ satisfies the differential equation
\[
\frac{d\mathbf{E}_{t,Z,W}\left[  J(t,\mathring{Z}(t),\mathring{W}(t))\right]}{dt}=-\beta\mathbf{E}_{t,Z,W}\left[  J(t,\mathring{Z}(t),\mathring
{W}(t))\right],
\]
and the desired trasversality condition follows.

\newpage{}

\section*{Appendix B. Proof of Proposition 1}
The efficient market hypothesis states that price is efficient when it is the expected future discounted dividends

\begin{equation}
P(t)= E\left[ \intop_{s=0}^{\infty} e^{-r s}D^{u}(t+s)ds \mid \mathfrak{F}_{t} \right] = E\left[ \intop_{s=0}^{\infty} e^{-(r-\xi)s}D(t+s)ds \mid \mathfrak{F}_{t} \right]
\end{equation}
where the process $D(t)$, the continuos-time dividend yield of a risky asset, is defined as $D(t)=D_{0}(t)+D_{1}(t)$ and
\begin{align*}
dD_{0}(t) =& \alpha_{I}I(t)dt+\sigma_{0}dw_{0}(t), \\
dD_{1}(t) =&-\alpha_{D}D_{1}(t)dt+\sigma_{D}dw_{D}(t),\\
dI(t)=&-\alpha_{I}I(t)dt+\rho_{I}\sigma_{0}dw_{0}(t)+(2\rho_{I}-\rho_{I}^{2})^{1/2}\sigma_{0}dw_{I}(t)
\end{align*}
Rewrite the dividend and informative signals as a trivariate Ornstein-Uhlenbeck process with $Z=(D_{0}, \, D_{1}, \, I)^{\top}$
\begin{equation}
\label{compact-form-2}
dZ(t)=A_{1}Z(t)dt+B_{1}^{1/2}dw(t)
\end{equation}
where
\begin{align}
\label{A1-B1-Equation}
A_{1} & \equiv \left(\begin{array}{ccc}
0 &  0& \alpha_{I} \\
0 &  -\alpha_{D} & 0\\
 0 & 0 & -\alpha_{I} \\
\end{array}\right),
\quad 
B_{1}^{1/2}\equiv
\left(\begin{array}{ccc}
\sigma_{0} & 0 & 0 \\
0 & \sigma_{D}& 0 \\
-\rho_{I}\sigma_{0} & 0& (2\rho_{I}-\rho_{I}^{2})^{1/2}\sigma_{0} \\
\end{array}\right), 
\quad
dw(t) \equiv \left(\begin{array}{c} dw_{0}\\ dw_{D} \\ dw_{I}
\end{array}\right).
\end{align}
Now $Z(t)$ has the following integral form
\[
Z(t+s) = H(s)Z(t)+\intop_{\tau=0}^{s} e^{A(s-\tau)}B^{1/2}dw(t+\tau),
\]
where $H(s)=e^{As}$. Solving differential equation $d H(s)/ds=AH(s)$, with boundary condition $
\mathcal{B}(0)=\left(\begin{array}{ccc}
 1 & 0 & 0 \\
 0 & 1 & 0 \\ 
 0 & 0 & 1\\
\end{array}\right)
$
yields
\[
H(s)=\left(\begin{array}{ccc}
 1 & 0 & 1-e^{-\alpha_{I}s} \\
 0 & e^{\alpha_{D}s} & 0 \\ 
 0 & 0 & e^{-\alpha_{I}s}\\
 \end{array}\right).
\]
Since $E \left[ D(t)\mid \mathfrak{F}_{t}\right]=D(t)$ we have 
\[
E\left[ D(t+s) \mid \mathfrak{F}_{t} \right] = D_{0}+e^{\alpha_{D}s} D_{1}+(1-e^{-\alpha_{I}s})E\left[ I(t) \mid \mathfrak{F}_{t} \right],
\]
thus 
\begin{align*}
& E\left[ \intop_{s=0}^{\infty} e^{-(r-\xi)s}D(t+s)ds \mid \mathfrak{F}_{t} \right] =  E_{t} \intop_{s=0}^{\infty} e^{-(r-\xi)s}\left \{ D_{0}+e^{\alpha_{D}s}D_{1}+(1-e^{-\alpha_{I}s})E\left[ I(t) \mid \mathfrak{F}_{t} \right] \right \}ds\\
=& \frac{1}{r-\xi} D_{0}(t)+\frac{1}{r-\xi+\alpha_{D}}D_{1}(t)+ \frac{1}{r-\xi}-\frac{1}{r-\xi+\alpha_{I}}\hat{I}(t),
\end{align*}
and the coefficients are
\[
\tilde{p}_{D_0} \equiv  \frac{1}{r-\xi}, \qquad \tilde{p}_{D_{1}} \equiv \frac{1}{r-\xi+\alpha_{D}}, \qquad \tilde{p}_{I}\equiv \frac{1}{r-\xi}-\frac{1}{r-\xi+\alpha_{I}}
\]
The constant term is obtained from Equation (\ref{optimal-demand}) imposing the efficient price described above when market clearing condition (\ref{market-clearing}) holds true
\[
\tilde{\Psi}=-\frac{T^{1/2}\left(B^{1/2}\right)^{\top}L-S}{r\varphi T} = 1
\]
where 
\[
T^{1/2} =\frac{(r-\xi-\alpha_{I}(\rho-1))\sigma_{0}}{(r-\xi)(r-\xi+\alpha_{I})};\qquad B^{1/2}= 0 ; \qquad S = -(r-\xi)p_{0}\\
\]
\[
 T = \left(-\frac{\alpha_{I}^{2} \left(-2+\rho _{I}\right) \rho_{I} \sigma_{0}^{2} }{(r-\xi)^{2} \left(r-\xi +\alpha_{I}\right)^{2}}+\left(\frac{1}{r-\xi} - \frac{\alpha_{I} \rho_{I}}{(r-\xi) \left(r-\xi +\alpha_{I}\right)}  \right)^{2} \sigma_{0}^{2}+\frac{\sigma_{D}^{2}}{\left(r-\xi +\alpha_{D}\right)^{2}}+\sigma_{\Theta}^{2}\right)
\]
that implies
\[
\tilde{p}_{0}=-\left(\frac{( (r-\xi+\alpha_{I})^{2}- 2 (r-\xi) \alpha_{I}\rho_{I})\sigma_{0}^{2}} {(r-\xi)^{2}(r-\xi + \alpha_{I})^{2}}+\frac{\sigma_{D}^{2}}{(r-\xi+\alpha_{D})^{2} } \right) \frac{r}{r-\xi}\varphi,
\]
solution of the proof.

\newpage{}

\section*{Appendix C. From Continuous Time Model to Discrete Time Data} 
The model is in continuous-time while data are discrete. We use Bergstrom's (1984, Thm. 3, p. 1167) solution to reformulate the model such that the discrete version, called the exact discrete solution, satisfies the discrete-time data. As underlined by McCrorie (2009), the exact discrete model differs from conventional discrete-time VAR models in that the coefficient matrix and the covariance matrix are functions of the exponential of a matrix. The exact discrete form is obtained from the solution of the continuous-time model (\ref{compact-form}) given by
\begin{equation}
\label{Exact-Discrete-Model-1}
Z(t) = F(\theta)Z(t-1) + \epsilon_{t}	\qquad	(t=1, \dots, T),
\end{equation}
where
\begin{equation}
\label{Exact-Discrete-Model-2}
F(\theta) = e^{A(\theta)},	\qquad \qquad (t=1,\dots, T),
\end{equation}
and the variance-covariance of the independent Gaussian white noise is the solution of the following integral
\begin{equation}
E(\epsilon_{t} \, \epsilon_{t}^{\top}) = \int_0^{1} e^{rA(\theta)} \Sigma(\mu) \left ( e^{rA(\theta)}\right)^{\top}dr = \Omega(\theta, \mu)
\end{equation}
where we assume constant riskless rate and zero mean and covariance matrices: $E(\epsilon_{t}) = 0,$ and $E(\epsilon_{t}\epsilon_{s}^{\top}) = 0$ with $(s \ne t)$.  

The exact discrete form (\ref{Exact-Discrete-Model-1}-\ref{Exact-Discrete-Model-2}) gives the following matrices 
\begin{equation}
\begin{array}{c}
F(\theta) =
\left(\begin{array}{cccc}
1 & 1 - e^{\alpha_{I}} & 0 & 0\\
0 & e^{\alpha_{I}} & 0 & 0 \\
0 & 0 & e^{\alpha_{D}} & 0 \\
0 & 0 & 0 & e^{\alpha_{\Theta}}\\
\end{array}\right) \\
 \Omega(\theta, \mu) = \left(\begin{array}{cccc}
- \frac{\sigma_{0}^{2}(e^{-2\alpha_{I}} (-1 + e^{\alpha_{I}})^{2}\rho-\alpha_{I})}{\alpha_{I}} & - \frac{\sigma_{0}^{2}e^{-2\alpha_{I}} (-1 + e^{\alpha_{I}})^{2}\rho_{I} }{\alpha_{I}} & 0 & 0\\
- \frac{\sigma_{0}^{2}e^{-2\alpha_{I}} (-1 + e^{\alpha_{I}})^{2}\rho_{I}}{\alpha_{I}} & - \frac{\sigma_{0}^{2}(-1 + e^{-2\alpha_{I}})\rho_{I}}{\alpha_{I}} & 0 & 0\\
0 & 0 & -\frac{(-1+e^{-2\alpha_{D}})\sigma_{D}^{2}}{2 \alpha_{D}} & 0 \\
0 & 0 & 0 & -\frac{(-1+e^{-2\alpha_{\Theta}})\sigma_{\Theta}^{2}}{2 \alpha_{\Theta}}\\
\end{array}\right)
\end{array}
\end{equation}
that is used to estimate the parameters $\theta$ = [$\alpha_{I}$, $\alpha_{D}$, $\alpha_{\Theta}$, $\sigma_{0}$, $\sigma_{D}$, $\sigma_{\Theta}, \rho_{I} $] in the likelihood function.
The estimates of the unobservable state vector $Z(t)$ are based on the information available to time $\mathfrak{F}(t)$, where $\mathfrak{F}(t)$ contains the observed data $y$ until $y(t)$. Kalman filter recursive procedure gets the estimates of the parameters $\theta$ in the state vector. The method of diffuse prior to initialize the Kalman filter is used assuming zero mean for the all state variables and, for the variance-covariance an identity matrix with $10^{6}$ on the diagonal.

The likelihood function of $y(t)$ data has joint density $L = \prod_{t=1}^{T}p(y_{t} \mid \mathfrak{I}_{t-1})$ where $p(y_{t} \mid \mathfrak{F}_{t-1})=N(\widehat{y}_{t \mid t-1},f_{t \mid t-1})$ and $f_{t \mid t-1}=E(y_{t}-\widehat{y}_{t \mid t-1} )(y_{t}-\widehat{y}_{t \mid t-1} )^{\top}$. The log-likelihood is given by 
\begin{equation}
\label{loglikelihood}
ln \, L = -\frac{1}{2} \sum_{t=1}^{T} ln \arrowvert f_{t \mid t-1} \arrowvert -\frac{1}{2} \sum_{t=1}^{T} (y_{t}-\widehat{y}_{t \mid t-1} )^{\top} f_{t \mid t-1}^{-1} (y_{t}-\widehat{y}_{t \mid t-1} ).
\end{equation}
such that the estimates of the parameters $\theta$ is given maximizing the equation (\ref{loglikelihood}).

\newpage{}
 
\section*{Figures}
\label{Figure1}
Figure 1. Market Risk Perception measured by risk aversion ($\varphi$) and noise volatility ($\sigma_{\Theta}$) with the corresponding utility of equilibrium price of Equilibrium-Type A and Equilibrium-Type B.
\bigskip
\begin{center}
\includegraphics[scale=1.4]{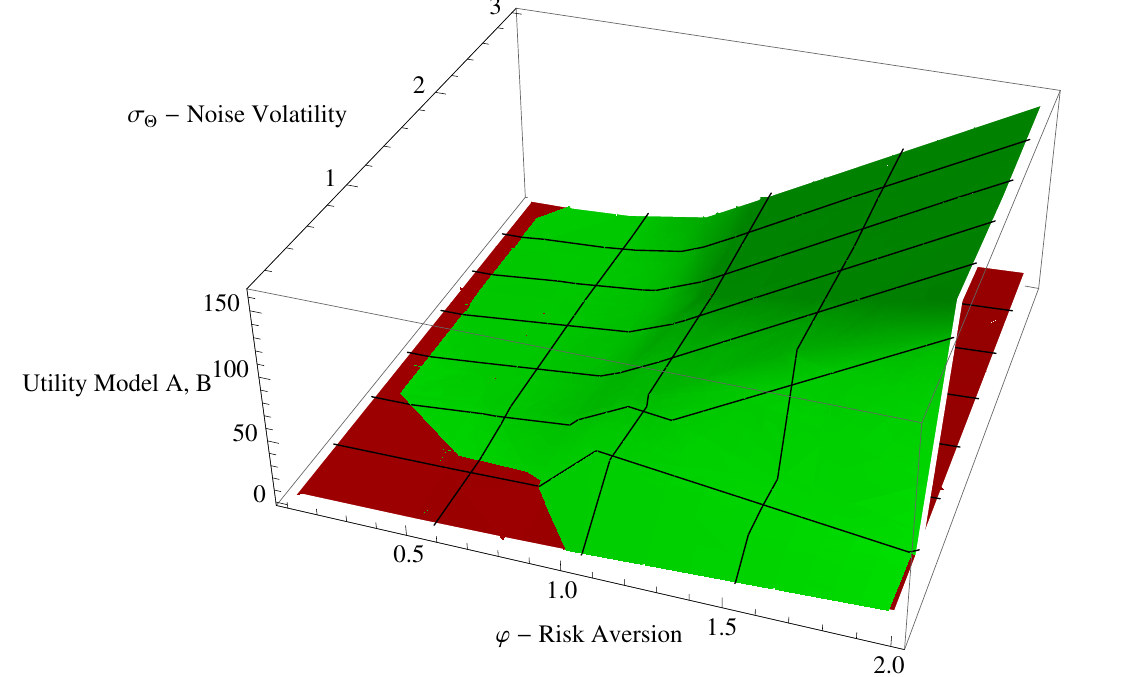}
\end{center}
\footnotesize \textbf{Note} The red area describes the utility values obtained by the informed investors when the efficient equilibrium price holds in the market (Equilibrium-Type A) while green area describes the utility obtained by the informed investors when the inefficient equilibrium price holds in the market (Equilibrium-Type B). The green area is higher than the red one when noise volatility  and risk aversion contemporaneously increase.

\newpage{}
\normalsize
Figure 2. S\&P 500 1850-2005: price and dividend (x10).

\begin{center}
  \includegraphics[scale=0.9]{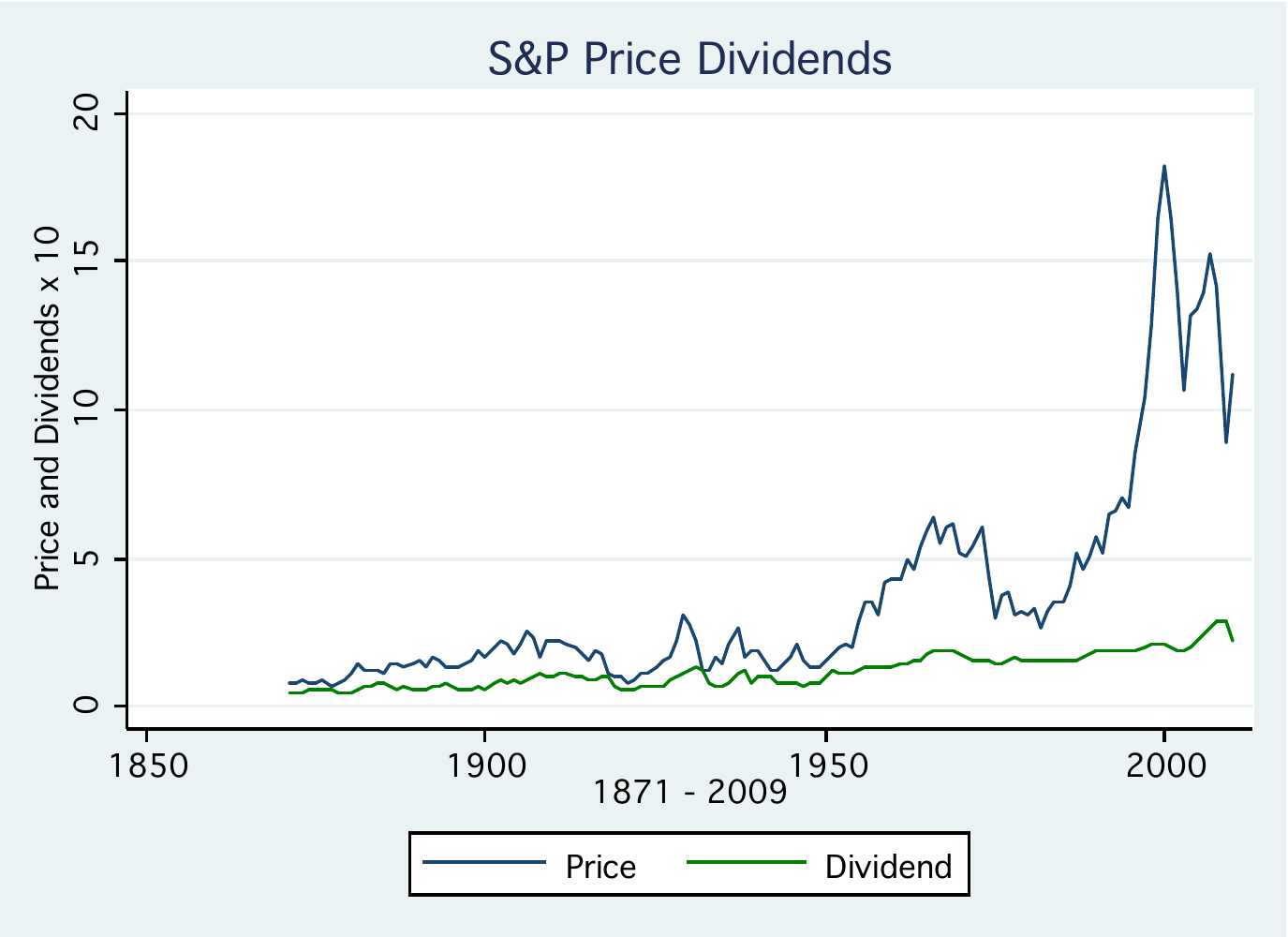}
\end{center}
\footnotesize \textbf{Note} More details on the dataset are available in Shiller's book (2000). According to Shiller (2000) monthly dividend data are computed from the S\&P four-quarter tools for the quarters since 1926, with linear interpolation to monthly figures. Dividend and earnings data before 1926 are from Cowles and associates, interpolated from annual data. Stock price data are monthly averages of daily closing prices through January 2010. The CPI-U (Consumer Price Index-All Urban Consumers) published by the U.S. Bureau of Labor Statistics begins in 1913; for years before 1913 it is spliced to the CPI Warren and Pearson's price index, by multiplying it by the ratio of the indexes in January 1913. December 1999 and January 2000 values for the CPI-U are extrapolated.

\newpage{}

\section*{Tables }
\normalsize
Parameter Set: $r=0.05;$ ~ $\beta=0.30$;~ $\alpha_{D}=0.50;~\alpha_{I}=0.10;~\alpha_{\Theta}=0.05;$ \\
$\quad\,~\sigma_{0}=0.50;~\sigma_{D}=0.10;~\sigma_{I}=1.00; ~\xi=0.011.$\\

\center Table I. Low risk aversion and noise volatility: $~\sigma_{\Theta}=0.50;~\varphi=0.50$. 
\begin{table*}[h!]
\centering
\begin{tabular}{c c c c c c}
\multicolumn{6}{c}{Candidate Equilibrium Price} \\
\hline 
Equilibrium& Utility  & $p_{0}$  & $p_{D_{0}}$  & $p_{D_{1}} $ &$ p_{I} $\\
\hline 
\hline 
\rowcolor[gray]{0.9}   Equilibrium-Type A  & 2.90  & -91.773 & 25.641     & 1.855  & 18.446\\
				Equilibrium-Type B  & 0.97  & -94.218  & 25.641    & 1.855  & -30.080   \\
				Equilibrium-Type B  & 0.95  & -93.148  & 25.641    & 1.855  & -30.897\\
				Equilibrium-Type B  & 0.65  & -82.902  & 25.641    & 1.855  & 51.713\\
				Equilibrium-Type B  & 0.62  & -81.688  & 25.641    & 1.855  & 52.479\\
				Equilibrium-Type B  & 0.41  & -93.635  & 25.641    & 1.855  & 5.775  \\
\hline
\end{tabular}
\captionsetup{margin={1.5cm,1.5cm}, font=footnotesize, labelformat=empty, format=plain, justification=justified}
\caption{Utility value with the corresponding optimal price coefficient for Equilibrium-Type A and Equilibrium-Type B with ``low'' level of risk aversion and noise volatility.}
\end{table*}

\newpage{}

\center Table II. High risk aversion and low noise volatility: $~\sigma_{\Theta}=0.50;~\varphi=1.00$.
\begin{table*}[h!]
\centering
\begin{tabular}{c c c c c c}
\multicolumn{6}{c}{Candidate Equilibrium Price} \\
\hline 
Equilibrium  & Utility  & $p_{0}$  & $p_{D_{0}}$  & $p_{D_{1}} $ &$ p_{I} $\\
\hline 
\hline 
		   		Equilibrium-Type B  & 43.96	 &   -2664.632 &	-89.311      & 1.855    & -13.384   \\
 \rowcolor[gray]{0.9} Equilibrium-Type A  & 4.08	         & 	-183.546 & 	25.641 	 & 1.855	 & 18.446    \\ 
				Equilibrium-Type B  &  3.87	 & 	-188.435 & 	25.641	 & 1.855	 & -30.080   \\
				Equilibrium-Type B  & 3.79	 & 	-186.296 & 	25.641	 & 1.855	 & -30.897   \\
				Equilibrium-Type B  & 3.16	 & 	-187.271 & 	25.641	 & 1.855	 & 5.775      \\
				Equilibrium-Type B  & 2.60	 & 	-165.804 & 	25.641	 & 1.855	 & 51.713    \\
				Equilibrium-Type B  & 2.50	 & 	-163.376 & 	25.641	 & 1.855	 & 52.479    \\
\hline
\end{tabular}
\captionsetup{margin={1.5cm,1.5cm}, font=footnotesize, labelformat=empty, format=plain, justification=justified}
\caption{Utility value with the corresponding optimal price coefficient for Equilibrium-Type A and Equilibrium-Type B with ``high'' level of risk aversion and ``low'' level of noise volatility.}
\end{table*}

\newpage{}

\center Table III. Low risk aversion and high noise volatility: ~$\varphi=0.50$; ~$\sigma_{\Theta}$=1.00.
\begin{table*}[h!]
\centering
\begin{tabular}{c c c c c c}
\multicolumn{6}{c}{Candidate Equilibrium Price} \\
\hline
Equilibrium  & Utility  & $p_{0}$  & $p_{D_{0}}$  & $p_{D_{1}} $ &$ p_{I} $\\
\hline 
\hline
				Equilibrium-Type B  & 8.65  & -465.202  & -80.445    & 1.855 & -87.639\\
				Equilibrium-Type B  & 8.39  & -675.844  & -89.311    & 1.855 & -13.384\\
\rowcolor[gray]{0.9} Equilibrium-Type A  & 2.90  & -91.773  & 25.641  & 1.855 & 18.446 \\
				Equilibrium-Type B  & 1.39  & -91.633  & 25.641    & 1.855 & 18.446\\
				Equilibrium-Type B  & 0.98  & -94.733  & 25.641   & 1.855 & -29.615  \\
				Equilibrium-Type B  & 0.94  & -92.594  & 25.641   & 1.855 & -31.252 \\
				Equilibrium-Type B  & 0.67  & -83.506  & 25.641    & 1.855 & 51.274 \\
				Equilibrium-Type B  & 0.42  & -93.688  & 25.641   & 1.855 & 5.775  \\
\hline 
\end{tabular}
\captionsetup{margin={1.5cm,1.5cm}, font=footnotesize, labelformat=empty, format=plain, justification=justified}
\caption{Utility value with the corresponding optimal price coefficient for Equilibrium-Type A and Equilibrium-Type B with ``low'' level of risk aversion and ``high'' level of noise volatility.}
\end{table*}

\newpage{}
 
\center Table IV. Descriptive Statistics of data.
\begin{table*}[h!]
\centering
\begin{tabular}{c c c c c}
 \hline \hline
Variable  & Mean  & Std. Dev.  & Skewness & Kurtosis \\ 
 \hline
 \\
  $P_{t}$  & 0.994 & 1.002 &2.067 &  6.745   \\ 
  \\
 $D_{t}$  & 0.032  & 0.014 &  0.730 &   2.837 \\ 
 \\
 $P_{t}^{u}$ &  376.379 &  379.342   & 2.067 & 6.745 \\ 
 \\
 $D_{t}^{u}$ &12.127  & 5.642  & 0.725 &  2.83\\ 
 \\
 $ln(P_{t}^{u})$  & -0.362 &  0.798   & 0.650 & 2.607   \\ 
 \\
 $ln(D_{t}^{u})$ & -3.548 & 0.466 & 0.047 & 1.912 \\ 
\hline \hline
\end{tabular}
\captionsetup{margin={1.5cm,1.5cm}, font=footnotesize, labelformat=empty, format=plain, justification=justified}
\caption{$P_{t}$ is the adjusted de-trended price, $P_{t}^{u}$ is real price, $ln(P_{t}^{u})$ is natural logarithm.}
\end{table*}

\newpage{}

 \center Table V. Unit Root Test.
\begin{table*}[h!]
\centering

\begin{tabular}{c c c c c c c}

\hline\hline
	 & \multicolumn{2}{c}{ADF} & \multicolumn{2}{c}{PP} & \multicolumn{2}{c}{KPSS}  	\\
	 		& drift		& trend	&	drift	&trend	&drift		&trend  \\
\hline
	 \\
$P_{t}$ 		&	0.29  	&   -1.11	&  -0.99 	&-2.29	&	1.15* 	&	0.26 	\\
\\
$D_{t}$  		& 	 -1.06 	& -3.46*	&    -1.04	& -3.34* 	&	1.65**	& 0.24	\\
\\
$P_{t}^{u}$	&	0.29  	&   -1.11  	&  -1.00	& -2.28	&	1.15** 	&	0.26\\
\\
$D_{t}^{u}$ 	&	 -1.06 	& -3.46*	&  -1.12	&  -3.34	&	1.65*		& 0.24	\\
\\
$ln(P_{t}^{u})$	&	-0.82  	& -2.29 	& -1.08	-&  -2.63	&	1.48*		&	0.20*\\
\\
$ln(D_{t}^{u})$ 	&	-2.04 	& -4.81** 	& -1.74	&  -4.07**	&	1.69**	&	0.09\\
\\
\hline
\multicolumn{6}{l}{Note: * rejects the at 5 \%, ** rejects at 1\%} \\ 
\hline\hline
\end{tabular}
\captionsetup{margin={1.5cm,1.5cm}, font=footnotesize, labelformat=empty, format=plain, justification=justified}
\caption{The Augmented Dickey-Fuller test for a unit root against the alternative of an explosive root is applied to each time series of real price and real dividend, the de-trended price and de-trended dividend, and the log stock price and log stock dividend. Similar results are obtained using the Phillips-Perron test used to control for unspecified autocorrelation and heteroschedasticity and Kwiatkowski-Phillips-Schmidt-Shin (KPSS) used for testing the stationarity as null hypothesis.}
\end{table*}

\newpage{}

\center Table VI. Time Series Properties of the data.
\begin{table*}[h!]
\centering
\begin{tabular}{c c c c c c c}
 \hline\hline
 Correlations: \quad  & $\Delta D_{t}$  & $\Delta P_{t}$ & \quad  &  & $\Delta D_{t}$  & $\Delta P_{t}$  \\ 
 \hline
 \\ 
 $\Delta D_{t}$ &          1.00 &      0.03 &  \quad &          $\Delta P_{t}$ 	&      0.03 &  1.00 \\
\\
 $\Delta D_{t-1}$ &      0.22 &    -0.00 &     \quad  &     $\Delta P_{t-1}$ &      0.44 &  0.14 \\
 \\
 $\Delta D_{t-2}$ &     -0.16 &      -0.00 &     \quad &      $\Delta P_{t-2}$ &     0.06 &  -0.11 \\
\\
 $\Delta D_{t-3}$ &     -0.17 &     -0.06 &       \quad &    $\Delta P_{t-3}$ &     -0.19 &   -0.09 \\
\\
 $\Delta D_{t-4}$ &     -0.15 &    -0.08 &        \quad   &  $\Delta P_{t-4}$ &     -0.15 & -0.16 \\
 \\
 $\Delta D_{t-5}$ &     -0.11 &    -0.02 &     \quad  &      $\Delta P_{t-5}$ &    -0.06 &  -0.19 \\
\\
\hline
\\
\multicolumn{7}{c}{$\sigma(\Delta D_{t})$ =0.003, \, $\sigma(\Delta P_{t})$ = 0.246 \, $\sigma(\Delta D_{t})/\sigma(\Delta P_{t}) = 0.013$  } \\
\\
\hline \hline
\end{tabular}
\captionsetup{margin={1.5cm,1.5cm}, font=footnotesize, labelformat=empty, format=plain, justification=justified}
\caption{Table reports descriptive statistics of data used in the model estimation.}
\end{table*}

\newpage{}

\center Table VII. OLS Regression with heteroskedasticity-robust standard error.
\begin{table*}[h!]
\centering
\def\sym#1{\ifmmode^{#1}\else\(^{#1}\)\fi}
\begin{tabular}{l*{1}{c}}
\hline\hline
 Dependent Variable           &\multicolumn{1}{c}{dividend}\\
\hline
\\
price       &      0.0126\sym{***}\\
            &     (11.13)         \\
[1em]
\_cons      &      0.0730\sym{***}\\
            &     (18.81)         \\
            \\
\hline
\(N\)       &         140         \\
ADF-test residuals: -2.64*\\
Portmanteau (Q) statistic =   448.82\\
\hline
\multicolumn{2}{l}{\footnotesize \textit{t} statistics in parentheses}\\
\multicolumn{2}{l}{\footnotesize \sym{*} \(p<0.10\), \sym{**} \(p<0.05\), \sym{***} \(p<0.01\)}\\
\hline\hline
\end{tabular}
\captionsetup{margin={1.5cm,1.5cm}, font=footnotesize, labelformat=empty, format=plain, justification=justified}
\caption{An OLS regression is performed to derive the implied interest rate.}
\end{table*}

\newpage{}

\center Table VIII. Mapping the Maximum Log-Likelihood function 1871--2009, Equilibrium-Type A and Equilibrium-Type B.
\begin{table*}[h!]
\centering
\begin{tabular}{c c c c c }
\hline\hline
					& \multicolumn{2}{c}{Equilibrium-Type A} &\multicolumn{2}{c}{Equilibrium-Type B}\\ 
					\
 Interest rate assumption & NoNoise & Full Noise \qquad & NoNoise & Full Noise \\ 
 \hline
 \\
3\%    &      807.42   &  834.88    &  811.99     &  830.65  \\
\\

6\% &     810.82     &      833.15 &    808.49    &  831.47  \\
  \\
Free $r$   &    814.75    &      831.62 &   814.96      &    828.92     \\
        \hline 
\hline
\end{tabular}
\captionsetup{margin={1.5cm,1.5cm}, font=footnotesize, labelformat=empty, format=plain, justification=justified}
\caption{Maximum Likelihood for Equilibrium-Type A and Equilibrium-Type B during 1871--2009, with noise and without noise, and interest rate is 3\%, 6\% or estimated.}
\end{table*} 

\newpage{}

\center Table IX. Data 1871--2009: Price Coefficient of Equilibrium-Type A and Equilibrium-Type B.
\begin{table*}[h!]
\centering
\begin{tabular}{c c c c c c c c}		
\\	
\hline\hline
\\	
\multicolumn{8}{c}{Assumption: Interest rate = 0.03, Full Noise} \\ 
 \\
 \hline
   \bigskip 
 	 &	 &$p_{0} $ & $p_{D_{0}}$ & $ p_{I}$&$p_{D_{1}} $ & & ML	\\
  \bigskip 
 Equilibrium-Type A    &        &   -0.001   &   68.965    &  65.980 &  2.985  & & 834.00\\
 \bigskip
 Equilibrium-Type B    &        &   -0.061   &   39.809    & 39.803 &   0.007         &  & 830.65\\
  \hline 
  \\
\multicolumn{8}{c}{Assumption: Interest rate = 0.06, Full Noise} \\ 
\\
 \hline
   \bigskip 
 	 &	 &$p_{0} $ & $p_{D_{0}}$ & $ p_{I}$&$p_{D_{1}} $ & & ML	\\
  \bigskip 
 Equilibrium-Type A    &        &   -0.001   &  22.471 & 20.657 & 1.814  & & 833.15\\
 \bigskip
 Equilibrium-Type B    &        &   -0.006   &   73.246 & 73.236 & 0.004         &  & 831.47\\
    \hline 
\\
\multicolumn{8}{c}{Assumptions: free $r$ and free $\gamma$, Full Noise} \\ 
\\
 \hline
   \bigskip 
 	 &	 & $p_{0} $ & $p_{D_{0}}$ & $ p_{I}$&$p_{D_{1}} $ 	& & ML	\\
  \bigskip 
 Equilibrium-Type A$^{a}$    &        &   -0.000   &   64.888 & 52.342 & 12.546 & & 831.62			\\
 \bigskip
 Equilibrium-Type B$^{b}$    &        &   -0.016 &	4.963 &	4.892 & 0.071 &        & 828.92\\
\multicolumn{8}{l}{\footnotesize Note: a) interest rate estimated is 0.0309 }\\
\multicolumn{8}{l}{\footnotesize Note: b) interest rate estimated is 0.0686}\\
\hline\hline
\end{tabular}
\captionsetup{margin={1.5cm,1.5cm}, font=footnotesize, labelformat=empty, format=plain, justification=justified}
\caption{Price coefficient of models estimated in Table VIII with noise when interest rate is assumed 3\%, 6\% or estimated.}
\end{table*} 

\newpage{}

\center Table X. Time Series properties of Data 1995--2000
\begin{table*}[h!]
\centering
\begin{tabular}{c c c c c c }
\hline\hline
	 & \multicolumn{2}{c}{ADF} & \multicolumn{2}{c}{PP} & KPSS 	\\
	 	& drift		& trend		&	drift	&trend	& \\
\hline
	 \\
Price 	&	 -1.39	&	-1.43		&  -1.41  	&-1.17	&	0.14	\\
\\
Dividend 	& 	-2.51		&	0.37		&   -2.06	& 2.05	&	0.26	\\
\\
D.Price$^{*}$ 	&	-5.70		& -5.83  		&  	-7.06 & -7.17	&	0.15	\\
\\
D.Dividend &	-3.18	   	&	-4.54	  	&	-3.63 &  -4.72	&	0.17	\\
\\
\hline
\\
\multicolumn{6}{l}{Cointegrating regression: $D_{t}= 0.049+ 0.0017P_{t}+\epsilon_{t}$}\\
\multicolumn{6}{l}{Johansen test: rank$(r)=1$: $\lambda max =1.23$ (3.76 at 5\%) }\\
\multicolumn{6}{l}{\qquad \qquad \qquad \,	 rank$(r)=1$: $\lambda trace =1.23$ (3.76 at 5\%) } \\
\multicolumn{6}{l}{(*) D. is a differentiated variable by lag 1}\\
\\
\hline\hline
\end{tabular}
\captionsetup{margin={1.5cm,1.5cm}, font=footnotesize, labelformat=empty, format=plain, justification=justified}
\caption{Table reports descriptive statistics of data used in the model estimation.}
\end{table*} 

\newpage{}

\center Table XI. Mapping the Maximum Log-Likelihood function 1995--2000, Equilibrium-Type A and Equilibrium-Type B.
\begin{table*}[h!]
\centering
\begin{tabular}{c c c c }
\hline\hline
\\					
 Interest rate assumption  & Equilibrium-Type A \qquad  & Equilibrium-Type B  & LR-test\\ 
\\ \hline
 \\
 1.5\%    &         540.96	&   636.39  & 190.86$^{***}$ \\
\\
3\% &             487.17     	&    638.58  & 302.08$^{***}$ \\
 \\
  Free $r$  &            489.31 &        638.43     &  298.83$^{***}$ \\
 \\
        \hline 
\hline
\multicolumn{4}{l}{\footnotesize LR-test: -2 (Equilibrium-Type A-Equilibrium-Type B)}\\
\multicolumn{4}{l}{\footnotesize $\chi_{3}$ = 7.82 ($5\%=^{*}$), 11.35 ($1\%=^{**}$), 16.27 ($0.1\%=^{***})$ }\\
\end{tabular}
\captionsetup{margin={1.5cm,1.5cm}, font=footnotesize, labelformat=empty, format=plain, justification=justified}
\caption{Maximum Likelihood for Equilibrium-Type A and Equilibrium-Type B during 1995--2000, with noise and without noise, and interest rate is 3\%, 6\% or estimated.}
\end{table*}

\newpage{}

\center Table XII. Data 1995--2000: Price's Coefficient of Equilibrium-Type A and Equilibrium-Type B. 
\begin{table*}[h!]
\centering
\begin{tabular}{c c c c c c c c}		
\\
\hline \hline
\\
\multicolumn{8}{c}{Assumption: Interest rate = 1.5\%, Full Noise} \\ 
\\
 \hline
   \bigskip 
 	 &	 &$p_{0} $ & $p_{D_{0}}$ & $ p_{I}$&$p_{D_{1}} $ & & ML	\\
  \bigskip 
 Equilibrium-Type A    &        &    -0.177	&	71.942	& 67.882		& 4.060	& & 540.96\\
 \bigskip
 Equilibrium-Type B    &        &   -0.061	&    71.944	& 0.001		& 1.089     &  & 636.39\\
    \hline 
    \\
\multicolumn{8}{c}{Assumption: Interest rate = 3\%, Full Noise} \\
\\ 
 \hline
   \bigskip 
 	 &	 &$p_{0} $ & $p_{D_{0}}$ & $ p_{I}$&$p_{D_{1}} $ & & ML	\\
  \bigskip 
 Equilibrium-Type A    &        &  -0.124 	& 34.602 	& 31.147 		& 3.455 	&   & 487.17 \\
 \bigskip
 Equilibrium-Type B    &        & -0.430	&34.604 	& 0.002		&0.002     &  & 638.58 \\
    \hline
\\
\multicolumn{8}{c}{Assumptions: free $r$ and free $\gamma$, Full Noise} \\ 
\\
 \hline
   \bigskip 
 	 &	 & $p_{0} $ & $p_{D_{0}}$ & $ p_{I}$&$p_{D_{1}} $ 	& & ML	\\
  \bigskip 
 Equilibrium-Type A$^{a}$    &        &   -0.506	& 210.13	& 70.755	& 139.23 & & 489.31			\\
 \bigskip
 Equilibrium-Type B$^{b}$    &        &   -0.883	& 0.004	& 1089.196	& 118.025    &    & 638.43\\
\multicolumn{8}{l}{\footnotesize Note: a) interest rate estimated is 0.006 }\\
\multicolumn{8}{l}{\footnotesize Note: b) interest rate estimated is 0.002}\\
\hline\hline
\end{tabular}
\captionsetup{margin={1.5cm,1.5cm}, font=footnotesize, labelformat=empty, format=plain, justification=justified}
\caption{Price coefficient of models estimated in Table XI with noise when interest rate is assumed 3\%, 6\% or estimated.}
\end{table*}

\newpage{}

\center Table XIII (a). Data 1995--2000. Parameter Estimation Model A. 
\begin{table*}[h!]
\centering
\begin{tabular}{c c c c}	
\hline\hline
\\
\multicolumn{4}{c}{Assumption: Interest rate = 1.5\% and Full Noise.} \\
\multicolumn{4}{c}{Maximum Log-Likelihood = 540.96}\\ 
\\
\hline
   \bigskip 
 	$\tilde{p}_{0} $ 	& $\tilde{p}_{D_{0}}$ 	 & $\tilde{p}_{I}$		&$\tilde{p}_{D_{1}} $ \\
	  -0.177	&	71.942	& 67.882		&	4.060	\\
 \hline 
   \\
$\alpha_{D}$ = 0.232 	& 	$\alpha_{I}$= 0.232		&	$\alpha_{\Theta}$ = 0.001 &   	$\varphi=16.75$		\\
	(0.11) 			& 	(0.000) 				&					(0.000) &		\\
\\
$\sigma_{D}$= 0.001 	&	$ \sigma_{\Theta}$=0.001 &$\rho_{I}$=1.299			 & 	$\lambda$ = 0.002\\
	(0.02) 			& 	(0.001) 				&		(0.002)			 &		(0.031)\\
\\
\hline \hline
\end{tabular}
\captionsetup{margin={1.5cm,1.5cm}, font=footnotesize, labelformat=empty, format=plain, justification=justified}
\caption{Estimated parameters of Model A when interest rate is 1.5\% as reported in Table XII.}
\end{table*} 

\newpage{}

\center Table XIII (b). Data 1995--2000. Parameter Estimation Model B. 
\begin{table*}[h!]
\centering
\begin{tabular}{c c c c}	
\hline\hline
\\
\multicolumn{4}{c}{Assumption: Interest rate = 1.5\% and Full Noise.} \\
\multicolumn{4}{c}{Maximum Log-Likelihood = 636.39}\\ 
\\
\hline
   \bigskip 
 	$\widehat{p}_{0} $ 	& $\widehat{p}_{D_{0}}$ 	 & $ \widehat{p}_{I}$		&$\widehat{p}_{D_{1}} $ \\
	-0.061	&71.944		& 0.001		& 1.089 \\
	 \\ 

 \bigskip
 	$\tilde{p}_{0} $ 		& $\tilde{p}_{D_{0}}$ 	 & $\tilde{p}_{I}$		&$\tilde{p}_{D_{1}} $ \\
	-0.071	&71.942		& 70.854		& 67.114 \\
 \hline 
   \\
$\alpha_{D}$ = 0.001 	& 	$\alpha_{I}$= 0.905		&	$\alpha_{\Theta}$ = 0.001 &   			\\
	(0.11) 			& 	(0.000) 				&					(0.000) &		\\
\\
$\sigma_{D}$= 0.001 	&	$ \sigma_{\Theta}$=0.124 &$\rho_{I}$=0.001			 & 	$\lambda$ = 0.001\\
	(0.004) 			& 	(0.001) 				&		(0.002)			 &		(0.001)\\
\\
\hline \hline
\end{tabular}
\captionsetup{margin={1.5cm,1.5cm}, font=footnotesize, labelformat=empty, format=plain, justification=justified}
\caption{Estimated parameters of Model B when interest rate is 1.5\% as reported in Table XII.}
\end{table*} 

\newpage{}

\end{document}